\newtheorem{thm}{Theorem}
\newtheorem{lem}{Lemma}
\newtheorem{cor}{Corollary}
\def\Stab{\textsc{Stabilizer}}
\def\F{\mathbb{F}}
\title{Constructing Permutation Arrays from Groups}
\author{Sergey Bereg\thanks{
Department of Computer Science,
Erik Jonsson School of Engineering and Computer Science,
University of Texas at Dallas.}
\and 
Avi Levy\thanks{
Department of Mathematics,
University of Washington.}
\and I. Hal Sudborough$^*$
}
\date{}
\begin{document}
\maketitle

\begin{abstract}
Let $M(n, d)$ be the maximum size of a permutation array on $n$ symbols with pairwise Hamming distance at least $d$. We use various combinatorial, algebraic, and computational methods to improve lower bounds for $M(n, d)$. 
We compute the Hamming distances of affine semilinear groups and projective semilinear groups, and unions of cosets of $AGL(1,q)$ and $PGL(2,q)$ with Frobenius maps to obtain new,
improved lower bounds for $M(n,d)$.  We give new randomized algorithms. We give better lower bounds for $M(n,d)$ also using new theorems concerning the contraction operation. 
For example, we prove a quadratic lower bound for $M(n,n-2)$ for all $n\equiv 2 \pmod 3$ such that $n+1$ is a prime power.
\end{abstract}

\section{Introduction}

Two permutations $\pi$ and $\sigma$ on $n$ symbols have 
{\em Hamming distance} $hd(\pi,\sigma)=d$ if, for exactly $d$ distinct elements $x$, 
$\pi(x)\neq \sigma(x)$. 
The {\em Hamming distance} of a permutation array $A$, denoted by $hd(A)$, is
the minimum $hd(\pi,\sigma)$ for all permutations $\pi\neq\sigma$ in $A$.
Arrays of permutations on $n$ symbols and with Hamming distance
at least $d$ between any two permutations in the array have been used
for error correcting codes in communication over very noisy power line
channels \cite{h-pc-06,pvyh03}. 
For positive integers $n$ and $d$, with $d\le n$, denote by $M(n,d)$
the maximum size of such an array. 
Constructing maximum size permutation arrays is difficult and, except
for special cases of $n$ and $d$, work has generally been limited to
finding good upper and lower bounds. 
It is known that sharply $k$-transitive groups $G$ of permutations on
$n$ symbols form a maximum size permutation array for pairwise Hamming
distance $d=n-k+1$ \cite{fd77}. 
Except for the exceptional case, i.e. $4$- and $5$-transitive
Mathieu groups \cite{cameron99,conway85,dixon96}, sharply
$k$-transitive permutation groups on $n$ symbols are only known for
$k=2$ and $k=3$ and for $n$ a power of a prime and one more than a
power of a prime, respectively, \cite{pass68}. 
Furthermore, it is known that sharply $k$-transitive groups do not
exist otherwise. 
Combinatorial arguments \cite{ccd04,gao13} are known, which give upper and lower
bounds for $M(n,d)$, for all $n$ and $d$. There are also computational
approaches that have been used to construct good permutation arrays
for small values of $n$. 
Due to the growth rate of $n!$, these computational approaches have
generally been limited to small values of $n$ and the use of
automorphism groups to factor the set of all permutations into a
smaller space and hence extend the range of computational processes
\cite{ccd04,jlos-15,sm12}. 
It is also known that $M(n,n-1)\ge kn$, where $k$ is the number of
mutually orthogonal latin squares (MOLS) of order $n$ \cite{ckl04}, which
means computing large collections of MOLS is related to searching for
large permutation arrays. 
Other techniques that have been used include permutation polynomials
\cite{ccd04}, and special groups, such as the Mathieu groups $M_{22}$, 
$M_{23}$ and $M_{24}$ \cite{cameron99}. 

Let $n$ be a positive integer and $Z_n=\{0,1,2,\dots,n-1\}$. 
Let $S_n$ be the symmetric group on $Z_n$  with composition defined
by $(\pi \sigma)(i) = \sigma(\pi(i))$. 
If $G$ is a group, then 
$G\sigma=\{g\sigma ~|~g\in G\}$ and $\sigma G=\{\sigma g~|~g\in G\}$
are called a {\em right coset} of $G$ and a {\em left coset} of $G$,
respectively, with the {\em representative} $\sigma$ \cite{book}. 

Our results, giving infinitely many improved lower bounds are obtained
by describing collections of cosets of groups. 
We show that groups $A\Gamma L(1,n=p^k)$ and $P\Gamma L(2,n=p^k)$,
where $p$ is prime, obtained from the affine general linear group $AGL(1,n)$
and the projective general group $PGL(2,n)$ by adding $k$ Frobenius
endomorphisms, both have pairwise Hamming distance $n-p^{k^*}$, where
$k^*$ denotes the largest proper\footnote{A factor of $k$ is {\em proper} if it is smaller than $k$.} factor of $k$.   
For example, $A\Gamma L(1,n=2^k)$, when $k$ is prime, consists of
$kn(n-1)$ permutations on $n$ symbols and has pairwise Hamming
distance $2^k-2$, so $M(n,n-2)\ge kn(n-1), P\Gamma L(2,n=2^k)$,
when $k$ is prime, consists of $k(n+1)n(n-1)$ permutations on $n+1$
symbols and has pairwise Hamming distance $2^k-2$, so $M(n+1,n-2) \ge
k(n+1)n(n-1)$. 
We also show that the groups $AGL(1,n)$ and $PGL(2,n)$, where $n=p^k$, together with
the coset defined by the single Frobenius endomorphism $f(x) = x^p$,
has pairwise Hamming distance $n-p$. 

Using a coset technique and computations involving random choices,
we give several improved lower bounds for $M(n,d)$,  For example, we
show that 
$M(13,5) \ge 6,639,048$ and 
$M(14,5) \ge 58,227,624$, 
which improve previous lower bounds.

\section{New Theoretical Bounds}

In this section we use group-theoretic techniques to obtain lower bounds on the sizes of permutation arrays with relatively large Hamming distance. 
First we consider permutation arrays obtained from semilinear groups over finite fields. 
Then we investigate a general procedure of modifying permutation arrays called contraction, and prove sharp Hamming distance bounds which involve the cycle structure of permutations in the permutation array.

Let $\F_n$ be a field of order $n$. Then $n=p^k$, where $p$ is a prime number throughout the remainder of the paper. A polynomial $f(x)\in \F_n[x]$ is called a {\em permutation polynomial} if it is injective. In this case, $f$ permutes the elements of $\F_n$.

We consider the affine general semilinear group $A\Gamma L(1,n)$ and the projective general semilinear group $P\Gamma L(2,n)$. 
These groups are described in the appendix.
It is well known that, for any permutations $\tau,\sigma,\rho\in S_n$,
$hd(\sigma\tau,\sigma\rho)=hd(\tau,\rho)$.

The following lemma shows that computing the Hamming distance of a group $G$ does not require computing the Hamming distance between all the pairs of element in $G$.
It shows that the computation time is $O(|G|)$ rather than $O(|G|^2)$.

\begin{lem}\label{lem:id}
Let $G$ be a group of permutations $G\subseteq S_n$ with $|G|>1$. 
The Hamming distance of $G$, i.e. $hd(G)$, is equal   
$hd(e,G\setminus \{e\})$, where $e$ is the identity permutation.
\end{lem}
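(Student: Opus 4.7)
The plan is to reduce every pairwise comparison in $G$ to a comparison with the identity, using only left-multiplication invariance of Hamming distance (the fact $hd(\sigma\tau,\sigma\rho)=hd(\tau,\rho)$ stated just before the lemma) and the group axioms.

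First, I would dispatch the easy inequality $hd(e, G\setminus\{e\}) \ge hd(G)$: each pair $(e,g)$ with $g\in G\setminus\{e\}$ is a pair of distinct elements of $G$, so the minimum over this restricted family is no smaller than the minimum over all pairs of distinct elements.

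The substantive direction is $hd(G) \ge hd(e, G\setminus\{e\})$. Here I would take arbitrary distinct $\pi,\sigma\in G$ and exhibit an element $h\in G\setminus\{e\}$ such that $hd(\pi,\sigma)=hd(e,h)$. Under the paper's composition convention $(\alpha\beta)(i)=\beta(\alpha(i))$, one has $\pi^{-1}\pi=e$, so applying the left-invariance property with multiplier $\pi^{-1}$ yields
\[
hd(\pi,\sigma)=hd(\pi^{-1}\pi,\pi^{-1}\sigma)=hd(e,\pi^{-1}\sigma).
\]
Because $G$ is a group, $h:=\pi^{-1}\sigma\in G$; because $\pi\ne\sigma$, $h\ne e$. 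Thus every pairwise distance realized inside $G$ is also realized by some pair $(e,h)$ with $h\in G\setminus\{e\}$, so the minimum over the latter family is at most $hd(G)$.

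Combining the two inequalities gives the equality. The only real thing to be careful about is the paper's non-standard (diagrammatic) composition convention when verifying $\pi^{-1}\pi=e$ and applying the cited invariance — once that bookkeeping is in order, no further obstacle arises, so I do not anticipate a hard step.
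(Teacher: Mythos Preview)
Your proposal is correct and follows essentially the same argument as the paper: both reduce an arbitrary pairwise distance $hd(a,b)$ to $hd(e,a^{-1}b)$ via left-invariance of Hamming distance, then observe that $a^{-1}b\in G\setminus\{e\}$ by the group axioms. The only difference is that you spell out both inequalities explicitly, whereas the paper leaves the trivial direction implicit.
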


\begin{proof}
Pick any two distinct permutations $a,b\in G$.  
Then, $hd(a,b)=hd(e,a^{-1}b)$. Since $a^{-1}b$ is in $G$ and is not the identity, 
the result follows.   
\end{proof}

\begin{lem}\label{lem:cosets}
Let $G$ and $H$ be subgroups of $S_n$, such that $G=\cup_{0\le i\le r} a_iH$, 
for some $r>0$, where $a_0=e$. Then, 
$$hd(G)=\min(\{hd(a_i,H)~|~ 0<i\le r\},hd(e, H\setminus \{e\})).$$
\end{lem}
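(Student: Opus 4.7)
The plan is to apply Lemma~\ref{lem:id} to reduce $hd(G)$ to distances from the identity, then decompose $G\setminus\{e\}$ along the given coset partition. Since $G$ is a subgroup of $S_n$ with $|G|>1$, Lemma~\ref{lem:id} gives $hd(G)=hd(e,G\setminus\{e\})$. Using $a_0=e$ we have the disjoint decomposition $G\setminus\{e\}=(H\setminus\{e\})\cup\bigcup_{i=1}^{r}a_iH$, which splits the minimum as
$$hd(G)=\min\Bigl(hd(e,H\setminus\{e\}),\ \min_{1\le i\le r}hd(e,a_iH)\Bigr).$$

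It then suffices to show $hd(e,a_iH)=hd(a_i,H)$ for each $i\ge 1$. I would first apply the left-invariance of Hamming distance recalled just before Lemma~\ref{lem:id}: multiplying the pair $(e,a_ih)$ on the left by $a_i^{-1}$ yields $hd(e,a_ih)=hd(a_i^{-1},h)$ for every $h\in H$. Taking the minimum over $h\in H$ gives $hd(e,a_iH)=hd(a_i^{-1},H)$, reducing the problem to the identity $hd(a_i^{-1},H)=hd(a_i,H)$.

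To close this gap I would combine two facts. First, $H$ is a group, so $H=H^{-1}$. Second, Hamming distance is invariant under simultaneous inversion of both entries, i.e.\ $hd(\sigma,\tau)=hd(\sigma^{-1},\tau^{-1})$, which follows from the observation that $x\mapsto\sigma(x)$ restricts to a bijection between $\{x:\sigma(x)\ne\tau(x)\}$ and $\{y:\sigma^{-1}(y)\ne\tau^{-1}(y)\}$. Reparametrizing $h=(h')^{-1}$ in $\min_{h\in H}hd(a_i^{-1},h)$ and invoking these two facts yields $\min_{h'\in H}hd(a_i,h')=hd(a_i,H)$, completing the identification.

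The hard part is precisely this final swap between $a_i$ and $a_i^{-1}$: a direct application of left-invariance only delivers $hd(a_i^{-1},H)$, and reaching $hd(a_i,H)$ requires either the inversion-invariance of $hd$ together with $H=H^{-1}$, or equivalently the observation that inversion permutes the nontrivial cosets $\{a_iH:1\le i\le r\}$ among themselves (since $a_i^{-1}\in G\setminus H$ whenever $a_iH\ne H$), so the minimum over $i$ is unaffected by replacing each $a_i$ by $a_i^{-1}$. Either route arrives at the stated formula.
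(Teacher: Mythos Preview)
Your argument is correct. Both you and the paper start from Lemma~\ref{lem:id} and the coset decomposition to reach $\min_{1\le i\le r} hd(e,a_iH)$; the divergence is in how the passage to $hd(a_i,H)$ is made. The paper does not apply left-invariance and then repair the resulting $a_i^{-1}$. Instead it first observes that inversion sends the left-coset decomposition to a right-coset one, $G=\bigcup_i Ha_i^{-1}$, so that $\min_i hd(e,a_iH)=\min_i hd(e,Ha_i^{-1})$; then a single right-multiplication by $a_i$ gives $hd(e,Ha_i^{-1})=hd(a_i,H)$. Your route instead establishes the pointwise identity $hd(a_i^{-1},H)=hd(a_i,H)$ via inversion-invariance of $hd$ together with $H=H^{-1}$, which is slightly more than is needed but entirely valid. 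The two arguments are close cousins: the paper's right-coset trick avoids isolating inversion-invariance as a separate lemma, at the price of tacitly using right-invariance of $hd$ (which, like left-invariance, holds but was not stated in the paper).
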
 

\begin{proof} 
Observe that, since $G=\cup_{0\le i\le r} a_iH$, then $G=\cup_{0\le i\le r} H a_i^{-1}$.
Indeed, if $g\in G$, then $g=a_ih$ for some $0\le i\le r$ and $h\in H$.
Then $g^{-1}\in G$ and $g^{-1}=h^{-1}a_i^{-1}\in H a_i^{-1}$. 
So, 
\begin{align*}
hd(G) &= \min(\{hd(e,a_iH)~|~ 0<i\le r\},hd(e, H\setminus \{e\}))\\
&= \min(\{hd(e,H a_i^{-1})~|~ 0<i\le r\},hd(e, H\setminus \{e\})) \\
&= \min(\{hd(a_i,H)~|~ 0<i\le r\},hd(e, H\setminus \{e\})).
\end{align*}
\end{proof}

\begin{thm} \label{hd-agl1}
The Hamming distance of $A\Gamma L(1,n)$ is $n-p^{k^*}$, 
where $n=p^k$ and $k^*$ denotes the largest proper factor of $k$.
\end{thm}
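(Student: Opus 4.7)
My plan is to apply Lemma \ref{lem:id} and reduce the problem to bounding the maximum number of fixed points among non-identity elements of $A\Gamma L(1, n)$. Every element of $A\Gamma L(1, n)$ can be written in the form $g_{a,b,i}(x) = a x^{p^i} + b$ for some $a \in \F_n^\times$, $b \in \F_n$, and $0 \le i < k$, with the identity corresponding to $(a, b, i) = (1, 0, 0)$. The theorem thereby reduces to establishing
\[
\max_{g \ne e} |\Fix(g)| = p^{k^*}.
\]

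For the upper bound, I would first show that $\Fix(g_{a,b,i})$, when nonempty, is a coset of the $\F_p$-subspace $V_{a,i} := \{y \in \F_n : a y^{p^i} = y\}$. Picking a fixed point $x_0$ and substituting $x = x_0 + y$, the additivity of Frobenius reduces $a x^{p^i} + b = x$ to $a y^{p^i} = y$, so this is immediate. When $i = 0$, any non-identity element has at most one fixed point. When $1 \le i \le k-1$, I would bound $|V_{a,i}|$ as follows: if $V_{a,i}$ contains a nonzero $y_0$, then every nonzero $y \in V_{a,i}$ satisfies $(y/y_0)^{p^i - 1} = 1$, so the nonzero elements of $V_{a,i}$ inject into the group of $(p^i - 1)$-th roots of unity in $\F_n^\times$, which has order $\gcd(p^i - 1, p^k - 1) = p^{\gcd(i, k)} - 1$. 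Therefore $|V_{a,i}| \le p^{\gcd(i, k)} \le p^{k^*}$, where the final inequality uses the definition of $k^*$ as the largest proper divisor of $k$.

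For the matching lower bound, the explicit element $g(x) = x^{p^{k^*}}$, corresponding to $(a, b, i) = (1, 0, k^*)$, is a non-identity element (since $k^* < k$) whose fixed point set is exactly the subfield $\F_{p^{k^*}} \subset \F_{p^k}$ of size $p^{k^*}$. Combining the two bounds yields $hd(A\Gamma L(1, n)) = n - p^{k^*}$. I expect the main subtle step to be the root-counting in the upper bound, in particular the use of the identity $\gcd(p^i - 1, p^k - 1) = p^{\gcd(i, k)} - 1$ and the recognition that the fixed point locus is a coset of an additive subgroup controlled by a multiplicative cyclic subgroup. Once this root count is secured, the remainder of the argument is essentially bookkeeping.
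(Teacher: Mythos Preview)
Your proposal is correct and follows essentially the same route as the paper: both reduce via fixed-point counting to bounding the number of roots of $ax^{p^i}+b-x$, use the Frobenius-additivity substitution $x=x_0+y$ to pass to the homogeneous equation $ay^{p^i}=y$, and then use the multiplicative trick $y\mapsto y/y_0$ to land in $(p^i-1)$-th roots of unity. The only cosmetic differences are that the paper invokes Lemma~\ref{lem:cosets} rather than Lemma~\ref{lem:id}, and bounds the root count via a subfield argument for $x^{p^i}-x$ instead of citing the identity $\gcd(p^i-1,p^k-1)=p^{\gcd(i,k)}-1$ directly; your version is slightly cleaner in also exhibiting the witness $x^{p^{k^*}}$ for tightness, which the paper leaves implicit.
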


Since $A\Gamma L(1,n)$ contains $kn(n-1)$ elements, we conclude
the following. 

\begin{cor} \label{hd-agl}
Let $n=p^k$ and let $k^*$ be the largest proper factor of $k\ge 1$. 
Then $M(n,n-p^{k^*})\ge kn(n-1)$. In particular, $$M(n,n-2)\ge kn(n-1),\qquad n=2^k,$$ where $k$ is prime. For example, $M(2048,2046)\ge 11\cdot 2048\cdot 2047=46114816$. 
\end{cor}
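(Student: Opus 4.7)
The corollary is immediate from Theorem \ref{hd-agl1} once the size of $A\Gamma L(1,n)$ is in hand, so my plan is to quote the theorem as a black box, verify the order of the group, and then unwind the specialization to $n = 2^k$. There is essentially no obstacle here: the whole substance has been packaged into Theorem \ref{hd-agl1}, and what remains are two elementary counts.

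First I would count $|A\Gamma L(1,n)|$. As described in the appendix, the elements of $A\Gamma L(1,n)$ are the permutations $x \mapsto a\, x^{p^i} + b$ of $\F_n$ with $a \in \F_n^{\times}$, $b \in \F_n$, and $0 \le i < k$. Distinct triples $(a,b,i)$ give distinct maps: evaluating at $x=0$ pins down $b$, evaluating at $x=1$ pins down $a$, and the residual identity $\phi^i = \phi^{i'}$ among powers of the Frobenius automorphism $\phi(x)=x^p$ forces $i = i'$ because $\phi$ has order $k$ in $\mathrm{Aut}(\F_n)$. Hence $|A\Gamma L(1,n)| = kn(n-1)$. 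Combined with Theorem \ref{hd-agl1}, which supplies $hd(A\Gamma L(1,n)) = n - p^{k^*}$, the group $A\Gamma L(1,n) \subseteq S_n$ furnishes a permutation array on $n$ symbols of size $kn(n-1)$ with minimum pairwise Hamming distance $n - p^{k^*}$, so by the definition of $M$ we conclude $M(n, n - p^{k^*}) \ge kn(n-1)$.

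For the displayed special case, suppose $n = 2^k$ with $k$ prime. The only factors of a prime $k$ are $1$ and $k$, so the largest proper factor is $k^* = 1$; therefore $p^{k^*} = 2$ and the general bound becomes $M(n, n-2) \ge kn(n-1)$. Taking $k = 11$ and $n = 2^{11} = 2048$ yields the numerical example $M(2048, 2046) \ge 11 \cdot 2048 \cdot 2047 = 46{,}114{,}816$.
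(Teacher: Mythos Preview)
Your argument is correct and follows exactly the route the paper intends: the corollary is recorded immediately after Theorem~\ref{hd-agl1} with the single remark that $A\Gamma L(1,n)$ has $kn(n-1)$ elements, and you have simply spelled out that count and the specialization in more detail. There is nothing to add or correct.
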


\begin{cor} \label{hd-agl2}
Let $n=p^k$ for any $k\ge 1$.
Let $s$ be the smallest prime factor of $k$. 
Then $M(n,n-p)\ge s\cdot q(q-1)$. For example,
\begin{align*}
  M(16,14)&\ge 2\cdot 16\cdot 15=480\\
  M(64,62)&\ge 2\cdot 64\cdot 63=8064\\ 
  M(81,78)&\ge 2\cdot 81\cdot 80=12960\\
  M(256,254)&\ge 2\cdot 256\cdot 255=130560\\
  M(512,510)&\ge 3\cdot 512\cdot 511=784896.
\end{align*}
\end{cor}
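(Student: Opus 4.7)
The plan is to construct a subarray of $A\Gamma L(1,n)$ consisting of exactly $s$ well-chosen cosets of $H := AGL(1,n)$. Writing $f_i(x) = x^{p^i}$ for the $i$-th Frobenius map, I would take
\[
A \;=\; \bigcup_{i=0}^{s-1} f_i H,
\]
a disjoint union of $s$ left cosets inside $A\Gamma L(1,n)$ of total size $s\,n(n-1)$. It would then suffice to prove $hd(A) \ge n-p$, after which $M(n,n-p) \ge s\,n(n-1)$ is immediate.

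The Hamming distance analysis splits into two cases. For two distinct permutations $\pi = f_i g$ and $\pi' = f_i g'$ in the same coset, the identity $hd(f_i g, f_i g') = hd(g, g')$ (stated in the excerpt) combined with sharp $2$-transitivity of $AGL(1,n)$ gives $hd(\pi,\pi') = n-1$. For $\pi = f_i g$ and $\pi' = f_j g'$ in different cosets with $i > j$ and $g(y) = \alpha y + \beta$, $g'(y) = \gamma y + \delta$, the agreement equation $\alpha x^{p^i} - \gamma x^{p^j} = \delta - \beta$ becomes, after the substitution $y = x^{p^j}$, an $\F_p$-linear equation $L(y) = \delta - \beta$ where $L(y) := \alpha y^{p^{i-j}} - \gamma y$. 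The solution count is at most $|\ker L|$, and since $\alpha y^{p^{i-j}} = \gamma y$ forces $y = 0$ or $y^{p^{i-j}-1} = \gamma/\alpha$, with the latter having at most $\gcd(p^{i-j}-1,\,p^k-1) = p^{\gcd(i-j,k)} - 1$ solutions in $\F_n^*$, I obtain $|\ker L| \le p^{\gcd(i-j,k)}$, so $hd(\pi,\pi') \ge n - p^{\gcd(i-j,k)}$. The combinatorial key is that for $i \ne j$ in $\{0, 1, \ldots, s-1\}$ one has $|i-j| \in \{1, \ldots, s-1\}$, which is automatically coprime to $k$ because every prime factor of $k$ is at least $s$; hence $\gcd(i-j,k) = 1$, and $hd(\pi,\pi') \ge n - p$ in this case as well.

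The main technical obstacle is the kernel-size bound for $L$, which should coincide with the calculation already appearing in the proof of Theorem~\ref{hd-agl1}; I expect to borrow it or briefly re-derive it. The conceptual novelty, relative to that theorem, is the coset-selection step: restricting from all $k$ cosets of $H$ in $A\Gamma L(1,n)$ to the $s$ cosets indexed by $\{0,1,\ldots,s-1\}$ forces every cross-coset Frobenius index difference to be coprime to $k$, raising the pairwise Hamming distance from $n - p^{k^*}$ (which applies to the whole group) all the way to $n - p$, while only shrinking the array from $k\,n(n-1)$ to $s\,n(n-1)$ permutations.
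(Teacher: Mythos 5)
Your proof is correct and follows the route the paper intends (the paper states this corollary without an explicit proof): you take the $s$ cosets $x^{p^i}\,AGL(1,n)$ for $0\le i<s$ and reuse the root bound $r(x^{p^m}+ax+b)\le p^{\gcd(m,k)}$ established inside the proof of Theorem~\ref{hd-agl1}, the key observation being that $0<|i-j|<s$ forces $\gcd(i-j,k)=1$ because every prime factor of $k$ is at least $s$. Your write-up also implicitly fixes the typo in the statement, which should read $M(n,n-p)\ge s\cdot n(n-1)$.
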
 

\begin{thm}\label{hd-pgl}
The Hamming distance of $P\Gamma L(2,n)$ is $n-p^{k^*}$, where $k^*$ denotes 
the largest proper factor of $k$. 
\end{thm}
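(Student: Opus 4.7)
The plan is to adapt the proof strategy of Theorem~\ref{hd-agl1} by applying Lemma~\ref{lem:cosets} to the coset decomposition
\[
P\Gamma L(2,n) \;=\; \bigcup_{i=0}^{k-1} \phi^{i}\cdot PGL(2,n),
\]
where $\phi(x) = x^{p}$ is the Frobenius automorphism, extended to $P^{1}(\F_{n})$ by fixing $\infty$. By Lemma~\ref{lem:cosets}, $hd(P\Gamma L(2,n))$ is the minimum of $hd(e, PGL(2,n)\setminus\{e\})$ and $\min_{1 \le i < k} hd(\phi^{i}, PGL(2,n))$. For the first quantity, sharp $3$-transitivity of $PGL(2,n)$ on $P^{1}(\F_{n})$ forces every non-identity fractional linear transformation to fix at most two points, so $hd(e, PGL(2,n)\setminus\{e\}) = (n+1) - 2 = n - 1$. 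The remaining work is to analyze, for each $1 \le i < k$, the minimum Hamming distance between the identity and any element of the coset $PGL(2,n)\phi^{i}$, or equivalently, to upper-bound the number of fixed points of such elements.

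The key claim is a rigidity statement: any $g \in PGL(2,n)\phi^{i}$ with at least three fixed points on $P^{1}(\F_{n})$ is $PGL(2,n)$-conjugate to $\phi^{i}$ itself, and therefore has exactly $p^{\gcd(i,k)} + 1$ fixed points, namely $\F_{p^{\gcd(i,k)}} \cup \{\infty\}$. To prove it, pick any three fixed points $y_{1}, y_{2}, y_{3}$ of $g$; by sharp $3$-transitivity some $\tau \in PGL(2,n)$ sends them to $\{0, 1, \infty\}$. Since $PGL(2,n)$ is normal in $P\Gamma L(2,n)$, conjugation by $\tau$ preserves the coset $PGL(2,n)\phi^{i}$, so write the conjugate as $\sigma\phi^{i}$ with $\sigma(t) = (at+b)/(ct+d)$. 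The conditions that $\infty$, $0$, and $1$ are fixed force $c = 0$, then $b = 0$, and finally $a = d$, in turn, so $\sigma$ is the identity and the conjugate equals $\phi^{i}$. The fixed-point set of $\phi^{i}$ on $P^{1}(\F_{n})$ is then $\F_{p^{\gcd(i,k)}} \cup \{\infty\}$, by the standard fact that $x^{p^{i}} = x$ in $\F_{n}$ iff $x$ lies in the subfield $\F_{p^{\gcd(i,k)}}$.

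Combining, the maximum number of fixed points of any non-identity element of $P\Gamma L(2,n)$ is $\max\bigl(2,\,\max_{1\le i<k}(p^{\gcd(i,k)}+1)\bigr)$; as $i$ ranges over $\{1,\ldots,k-1\}$, $\gcd(i,k)$ ranges over all proper divisors of $k$ and attains its maximum $k^{*}$ at $i = k^{*}$, so the overall maximum is $p^{k^{*}} + 1$, realized concretely by $\phi^{k^{*}}$. Hence $hd(P\Gamma L(2,n)) = (n+1) - (p^{k^{*}}+1) = n - p^{k^{*}}$. The principal technical obstacle is the rigidity claim of the second paragraph: verifying that conjugation by elements of $PGL(2,n)$ preserves the coset structure in $P\Gamma L$, and that the three fixed-point equations really do pin down $\sigma$ uniquely. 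Once that is in place, the remainder is a direct bookkeeping exercise closely paralleling the affine case of Theorem~\ref{hd-agl1}.
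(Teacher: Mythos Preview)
Your proof is correct and takes a genuinely different route from the paper's. The paper decomposes $P\Gamma L(2,n)$ into cosets of the point stabilizer $S=\Stab_{P\Gamma L(2,n)}(\infty)\cong A\Gamma L(1,n)$ and then invokes Theorem~\ref{hd-agl1} directly: since $hd(S)=n-p^{k^*}$ is already known, an application of Lemma~\ref{lem:cosets} together with a short remark about $hd(\pi_i,S)$ finishes the argument with almost no new computation. You instead decompose by cosets of the normal subgroup $PGL(2,n)$ and prove a rigidity statement---any element of a nontrivial Frobenius coset with at least three fixed points is $PGL$-conjugate to the pure Frobenius power $\phi^{i}$---which lets you read off the maximal fixed-point count in that coset as exactly $p^{\gcd(i,k)}+1$.

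The paper's approach is shorter precisely because it recycles the affine computation wholesale, while yours is more self-contained and makes the projective geometry explicit: you show that the extremal fixed-point set in each coset is realized by (a conjugate of) the Frobenius map itself, a structural fact the paper's reduction leaves implicit. Your two technical checks are both sound: conjugation by $\tau\in PGL(2,n)$ preserves each coset $PGL(2,n)\phi^{i}$ because $PGL$ is normal and $\tau$ already lies in it (so $\tau g\tau^{-1}g^{-1}\in PGL$ for every $g$), and the three equations $\sigma\phi^{i}(0)=0$, $\sigma\phi^{i}(1)=1$, $\sigma\phi^{i}(\infty)=\infty$ do force $\sigma$ to fix $0,1,\infty$ since $\phi^{i}$ already does, whence $\sigma=e$ by sharp $3$-transitivity.
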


Since $P\Gamma L(2,n)$ contains $k(n+1)n(n-1)$ elements, we conclude  the following. 

\begin{cor}
Let $n=p^k$ for any $k\ge 1$.
Let $k^*$ be the largest proper factor of $k$. 
Then $M(n+1,n-p^{k^*})\ge k(n+1)n(n-1)$. In particular when $n=2^p$, we have $M(n+1,n-2)\ge p(n+1)n(n-1)$ and therefore
\begin{align*}
M(9,6)&\ge 3\cdot 9\cdot 8\cdot 7=1512\\
M(17,14)&\ge 2\cdot 17\cdot 16\cdot 15=8160\\
M(28,24)&\ge 3\cdot 28\cdot 27\cdot 26=58968\\
M(33,30)&\ge 5\cdot 33\cdot 32\cdot 31=163680.
\end{align*}
\end{cor}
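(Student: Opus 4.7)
My plan is to parallel the coset-decomposition strategy of Lemma~\ref{lem:cosets}, combined with the sharp $3$-transitivity of $PGL(2,n)$ on $\mathbb{P}^1(\F_n)$ and the elementary structure of Frobenius fixed subfields. Let $\phi(x)=x^p$ denote the Frobenius. The Galois group $\operatorname{Gal}(\F_n/\F_p)=\langle\phi\rangle$ has order $k$ and indexes the cosets of $PGL(2,n)$ inside $P\Gamma L(2,n)$, giving the coset decomposition $P\Gamma L(2,n)=\bigcup_{i=0}^{k-1}\phi^i\cdot PGL(2,n)$. By Lemma~\ref{lem:cosets} together with translation-invariance of Hamming distance, the Hamming distance of $P\Gamma L(2,n)$ equals $(n+1)-F$, where $F$ is the largest number of $\mathbb{P}^1(\F_n)$-fixed points attained by any non-identity element of $P\Gamma L(2,n)$. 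I will compute $F$ one coset at a time.

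For the coset $i=0$, any non-identity element of $PGL(2,n)$ is a M\"obius transformation and therefore fixes at most $2$ points of $\mathbb{P}^1(\F_n)$. The heart of the argument is the following claim: for $0<i<k$ and every $\sigma\in\phi^i\cdot PGL(2,n)$, one has $|\Fix(\sigma)|\le p^{\gcd(i,k)}+1$, with equality attained by $\sigma=\phi^i$. For the upper bound I use a conjugation trick. Suppose $\sigma$ fixes three distinct points $x_1,x_2,x_3\in\mathbb{P}^1(\F_n)$. Sharp $3$-transitivity of $PGL(2,n)$ yields an $h\in PGL(2,n)$ with $h(x_1)=0$, $h(x_2)=1$, $h(x_3)=\infty$. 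Normality of $PGL(2,n)$ in $P\Gamma L(2,n)$ guarantees that the conjugate $\sigma':=h\sigma h^{-1}$ lies in the same coset, so $\sigma'=\phi^i g'$ for some $g'\in PGL(2,n)$, and $\sigma'$ fixes $0,1,\infty$. Since $\phi^i$ itself already fixes each of $0,1,\infty$, the conditions $\sigma'(y)=y$ for $y\in\{0,1,\infty\}$ reduce to $g'$ fixing $0,1,\infty$, and sharp $3$-transitivity then forces $g'=e$. Hence $\sigma'=\phi^i$, whose fixed set in $\mathbb{P}^1(\F_n)$ is precisely $\F_{p^{\gcd(i,k)}}\cup\{\infty\}$, of cardinality $p^{\gcd(i,k)}+1$. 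Conjugation preserves fixed-point counts, so $|\Fix(\sigma)|=p^{\gcd(i,k)}+1$; if instead $\sigma$ fixes at most $2$ points, the bound is trivial since $p^{\gcd(i,k)}+1\ge 3$.

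Finally, maximizing $\gcd(i,k)$ over $0<i<k$ gives $k^*$, the largest proper divisor of $k$, attained at $i=k^*$. Hence $F=p^{k^*}+1$ (which dominates the contribution $2$ from the $i=0$ coset), and the Hamming distance equals $(n+1)-(p^{k^*}+1)=n-p^{k^*}$, as required. The step I expect to be most delicate is the conjugation bookkeeping: one must verify carefully that conjugating $\sigma=\phi^i g$ by $h\in PGL(2,n)$ really remains inside $\phi^i\cdot PGL(2,n)$, which amounts to tracking how $\phi^i$ twists $h$ by applying the Galois action to its coefficients in the semidirect-product presentation $P\Gamma L(2,n)=PGL(2,n)\rtimes\langle\phi\rangle$.
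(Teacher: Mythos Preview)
Your argument is correct and takes a genuinely different route from the paper. The paper proves Theorem~\ref{hd-pgl} by decomposing $P\Gamma L(2,n)$ into cosets of the point-stabilizer $S=\Stab_{P\Gamma L(2,n)}(\infty)\cong A\Gamma L(1,n)$, thereby reducing the projective case to the already-established affine case (Theorem~\ref{hd-agl1}), whose proof is a polynomial root-counting argument showing $r(x^{p^i}+ax+b)\le p^{\gcd(i,k)}$. You instead decompose along the Galois cosets $\phi^i\cdot PGL(2,n)$ and exploit sharp $3$-transitivity: any element of $\phi^i\cdot PGL(2,n)$ with at least three fixed points is $PGL(2,n)$-conjugate to $\phi^i$ itself, so its fixed-point count is exactly $p^{\gcd(i,k)}+1$. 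This bypasses the polynomial manipulations entirely and handles the projective group directly, without passing through the affine setting; it also yields the slightly sharper observation that within each nontrivial Galois coset every element has either at most two fixed points or exactly $p^{\gcd(i,k)}+1$. The paper's route, by contrast, proves the affine result as an independent theorem first, so that Corollaries~\ref{hd-agl} and the present one both flow from a single root-counting computation. Your concern about the conjugation bookkeeping is well-placed but easily settled: normality of $PGL(2,n)$ in $P\Gamma L(2,n)$ is immediate from the semidirect-product description $P\Gamma L(2,n)=PGL(2,n)\rtimes\langle\phi\rangle$, so conjugation by any $h\in PGL(2,n)$ preserves each coset $\phi^i\cdot PGL(2,n)$.
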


Using the following lemma, one can compute the Hamming distance simply by counting roots of polynomials.

For any polynomial $f$ with coefficients over a finite field $\F_n$, 
let $r(f)$ denote the number of roots of $f$ in $\F_n$.  
If $f$ and $g$ are polynomials that define permutations on $\F_n$, 
then the Hamming distance between the permutations $f$ and $g$ is equal to 
$n$ minus the number of roots in the polynomial $f-g$, i.e. $hd(f,g) = n - r(f-g)$.  
This is easily seen by observing that $f(x) = g(x)$ is equivalent to $f(x)-g(x) = 0$; 
in other words, $x$ is a root of the polynomial $f-g$.  

\begin{lem}\label{lem:poly}
For any distinct polynomials $f,g\in\F_n[x]$ we have $$hd(f,g) = n-r(f-g).$$
\end{lem}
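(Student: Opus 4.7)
The plan is to observe that this lemma is essentially a direct translation of the definition of Hamming distance into the language of polynomial roots, as hinted in the paragraph immediately preceding the statement. The key identity is that $f(x) = g(x)$ if and only if $(f-g)(x) = 0$, so the set of points where $f$ and $g$ agree is exactly the root set of $f-g$ inside $\F_n$.

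First I would partition $\F_n$ into the agreement set $A = \{x \in \F_n : f(x) = g(x)\}$ and its complement $D = \F_n \setminus A$, the set where $f$ and $g$ disagree. By definition, $|D| = hd(f,g)$ and $|A| = r(f-g)$, since an element of $\F_n$ lies in $A$ precisely when it is a root of the polynomial $f-g$. Since $|A| + |D| = n$, we conclude $hd(f,g) = n - r(f-g)$.

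The only mild subtlety worth flagging is that distinctness of $f$ and $g$ as polynomials in $\F_n[x]$ does not preclude them from inducing the same function on $\F_n$ (for instance, $f-g$ could be a nonzero multiple of $x^n - x$, which vanishes identically on $\F_n$). The formula remains correct in this degenerate case: $r(f-g) = n$ and $hd(f,g) = 0$. This causes no issue provided we interpret $r(\cdot)$ as the number of roots in $\F_n$ counted without multiplicity, which is the convention already fixed in the paragraph preceding the lemma. There is no genuine obstacle here; the lemma is a one-line counting argument.
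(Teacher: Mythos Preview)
Your argument is correct and is essentially identical to the paper's own justification, which appears in the paragraph immediately preceding the lemma: agreement of $f$ and $g$ at $x$ is equivalent to $x$ being a root of $f-g$, so the disagreement count is $n-r(f-g)$. Your additional remark about distinct polynomials possibly inducing the same function is a valid observation that the paper does not make explicit.
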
 

The number of roots of polynomials of $AGL(1,n)$ and Frobenius mappings is well known but we give a proof for completeness.

\begin{proof}[Proof of Theorem \ref{hd-agl1}]
Let $H=AGL(1,n)$. 
Since $A\Gamma L(1,n)=\cup_{0\le i<k} x^{p^i}H$ and $hd(H)=n-1$, 
by Lemma \ref{lem:cosets} it suffices to prove $hd(x^{p^i},H) \ge n-p^{k^*}$ for all $1\le i<k$.
By Lemma \ref{lem:poly}, it suffices to prove 
$r(x^{p^i}+ax +b)\le p^{k^*}$ for any $1\le i<k$, and $a\ne 0, b\in \F_n$.
Fix $a\ne 0,b\in\F_n$. 
We show that 
\begin{equation} \label{roots}
r(t_{a,b}(x)) \le r(t_a(x)) \le r(t(x)),
\end{equation}
where $t_{a,b}(x)=x^{p^i}+ax+b, t_a(x)=x^{p^i}+ax$, and $t(x)=x^{p^i}- x$.  

If $t_{a,b}(x)$ has no root in $\F_n$, then clearly $r(t_{a,b}(x)) \le r(t_a(x))$.
Suppose that $t_{a,b}(x)$ has a root, say $y_0$.
For any root $y$ of $t_{a,b}(x)$, we have $t_a(y-y_0) = (y-y_0)^{p^i}+a(y-y_0) 
= y^{p^i} -y_0^{p^i} + ay - ay_0 = t_a(y)-t_a(y_0) = 0$, where the second equation 
follows from the property $(a+b)^p = a^p + b^p$ of Frobenius endomorphisms \cite{cameron99}.  
Thus, $y-y_0$ is a root of $t_a$.  Since the mapping $y\to (y-y_0)$ is an injection, 
the first inequality of (\ref{roots}) follows.

We prove the second inequality of (\ref{roots}), by showing $r(s_a(x))\le r(s(x))$, where $s_a(x)=t_a(x)/x = x^{p^i-1}+a$, 
and $s(x)=t(x)/x = x^{p^i-1}-1$.  
Suppose that $a=0$. Then 0 is the only root of $s_a(x)$.
Since $s(x)$ also has the root 1, the inequality is trivially true.
So, assume that $a\ne 0$. Then 0 is not a root of $s_a(x)$. 
We may also assume that $s_a(x)$ has at least one root; otherwise, the inequality is trivially true.  
Let $z_0$ be a root of $s_a(x)$.  
As $z$ ranges over all roots of $s_a(x)$, map $z$ to $z/z_0$. 
Observe that:
\[ 
s\left(\frac {z}z_0\right)=\left(\frac {z}z_0\right)^{p^i-1}-1=
\frac{z^{p^i-1}}{z_0^{p^i-1}}-1=\frac{-a}{-a}-1=0.
\]
So, $z/z_0$ is a root of $s(x)$. 
Since the map is injective, it follows that $r(t_a(x)) \le r(t(x))$.

Let $S$ be the set of all roots of $t(x) = x^{p^i}- x$.  
Observe that $S$ forms a finite field, since the set of roots are closed under 
the operations of addition, multiplication, and division.  
Thus, $S$ is a subfield of $\F_n$, and hence the cardinality of $S$ divides the cardinality of $\F_n$,
which is $p^k$.  So, $|S| = p^j$, for some $j$, where $j~|~k$.  
Now consider the extension of $t(x)=x^{p^i}- x$ into its splitting field \cite{d-aa-99,ln-ff-97}.  
In this field, the expanded root set forms $\F_{p^i}$.  
So, $S$ is a subfield of $\F_{p^i}$, and so $j~|~i$.  
Thus, $j$ divides both $i$ and $k$, i.e. $j = r(t(x))\le p^{gcd(i,k)}\le p^{k^*}$.  
\end{proof}

\begin{proof}[Proof of Theorem \ref{hd-pgl}]
From the definition of $P\Gamma L(2,n)$ we have 
\[
f(x)= \left\{
\begin{array}{lll}
\frac{ax^{p^i}+b}{cx^{p^i}+d} & \mbox{if } 
x\in\F_n \mbox{ and }cx^{p^i}+d\ne 0,\\
\infty & \mbox{if } 
x\in\F_n, cx^{p^i}+d=0 \mbox{ and }ax^{p^i}+b\ne 0,\\
a/c & \mbox{if } 
x=\infty\mbox{ and }c\ne 0,\\
\infty & \mbox{if } 
x=\infty,c=0 \mbox{ and }a\ne 0.
    \end{array}
  \right.
\]

It follows that $S=\Stab_{P\Gamma L(2,n)}(\infty)$ is isomorphic to $A\Gamma L(1,n)$.\footnote{Recall that,
for a permutation group $G$ on a set $X$, the {\em stabilizer} of an element $x\in X$ is the set of permutations $\{g\in G : g(x)=x\}$.}
So, $hd(S)=hd(A\Gamma L(1,n))=n-p^{k^*}$ by Theorem \ref{hd-agl1}.
Observe that $P\Gamma L(2,n)=\cup_{k=0}^{n-1} \pi_kS$, where $\pi_k$ ($1\le k<n$)
is a permutation in $P\Gamma L(2,n)$ that maps $k$ to $\infty$, and $\pi_0$ is the identity permutation.
(Such permutations are in $P\Gamma L(2,n)$, because it is sharply 3-transitive and hence 2-transitive.)
By Lemma \ref{lem:cosets}, 
\[ hd(P\Gamma L(2,n))=\min(\{hd(\pi_i,S)~|~1\le i<n\},\{hd(e,S-\{e\})). \]
As $hd(\pi_i,S)=hd(S)=hd(A\Gamma L(1,n))\ge n-p^{k^*}$, the result follows. 
\end{proof}

\subsection{Contraction}

In this section we describe a general method of modifying a permutation array called {\em contraction}. 
Contraction allows one to transfer a permutation array from $S_n$ to $S_{n-m}$ without affecting the Hamming distance by too much. 
First, we explain the idea of contraction for $m=1$.

The {\em contraction} of a permutation $\sigma$ on $S_n$, denoted by $\sigma^{CT}$, is the permutation on $S_{n-1}$ defined by the following, where $0\le x\le n-2$:

\begin{equation}
\sigma^{CT}(x) = 
\begin{cases} \sigma(x) & \mbox{if } \sigma(x)\ne n-1, \mbox{ and} \\
\sigma(n-1) & \mbox{if } \sigma(x)=n-1.
\end{cases} 
\end{equation}

That is, the contraction $\sigma^{CT}$ of $\sigma$ is formed by substituting $\sigma(n-1)$  for $n-1$ and deleting the symbol $n-1$ altogether. 
For instance, if $\sigma(n-1)=n-1$, then $\sigma^{CT}$ is formed by simply deleting the symbol $n-1$. 
For a PA $A$ on $S_n$, let $A^{CT}=\{\sigma^{CT}~|~\sigma\in A\}$.  
In general, as the contraction operation leaves most of any permutation’s values untouched, the Hamming distance of $A^{CT}$ (as we shall see) is at least as large as three less than the Hamming distance of $A$ itself (see also \cite{ycy-08}). 
Specifically, a decrease of three in the Hamming distance between two permutations $\rho$ and $\sigma$ occurs if and only if, for some integers $i,j<n-1$, and symbols $r,s<n-1$: (a) $\rho(i)=n-1$ and $\sigma(i)=r$, (b) $\rho(j)=s$ and $\sigma(j)=n-1$, and (c) $\rho(n-1)=r$ and $\sigma(n-1)=s$.  
It follows that there is a decrease of three in the Hamming distance between $\rho$ and $\sigma$ if and only if the permutation $\rho^{-1}\sigma$ contains the cycle $(n-1\ r\ s)$ of length 3. 
If $A$ is a group, since the order of a 3-cycle is three, the order of the group must be divisible by 3 
(by Lagrange's theorem \cite{cameron99}). 
So, if $A$ is a group whose order is not divisible by 3, there can be no 3-cycle and, therefore, the contraction operation decreases the Hamming distance by at most 2.  

\[
\begin{array}{r ccc ccc}
{\rm positions} & \dots & i &\dots & j &\dots & n-1 \\
\rho= & \dots & n-1 & \dots & s & \dots & r \\
\sigma= & \dots & r & \dots & n-1 & \dots & s \end{array} 
\]

Of course the contraction operation can be applied iteratively. 
One can first contract to a permutation on $S_{n-1}$, then $S_{n-2}$, then $S_{n-3}$, $\dots$  
We investigate conditions on the group $A$ that is contracted in order to understand the number of times successive contractions decrease the Hamming distance by 2. 

\begin{lem}\label{lem:2}
Let $\sigma$ and $\tau$ be two permutations in $S_n$. 
If $hd(\sigma^{CT},\tau^{CT})=hd(\sigma,\tau)-2$ then 
(i) $n-1$ is in a cycle $C$ of length at least two in the cycle decomposition of $\sigma^{-1} \tau$, and 
(ii) the cycle decomposition of $(\sigma^{CT})^{-1} \tau^{CT}$ is the same except that the length of $C$ is decreased by two by removing $n-1$ and either the next or the previous element in $C$.
\end{lem}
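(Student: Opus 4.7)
The plan is to track exactly which positions contraction alters, enumerate the small number of configurations in which the Hamming-distance loss equals $2$, and in each such configuration read off the cycle structure of $\sigma^{-1}\tau$ directly.

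First I would set $i\in\{0,\ldots,n-2\}$ to be the position with $\sigma(i)=n-1$ (if it exists; otherwise $\sigma(n-1)=n-1$), and $j$ analogously for $\tau$. Observe that $\sigma^{CT}$ differs from $\sigma$, regarded as acting on $\{0,\ldots,n-2\}$, only at $i$, where it takes the value $\sigma(n-1)$, and similarly for $\tau^{CT}$ at $j$. Consequently the loss $\delta := hd(\sigma,\tau)-hd(\sigma^{CT},\tau^{CT})$ depends only on the six values $\sigma(i),\sigma(j),\sigma(n-1),\tau(i),\tau(j),\tau(n-1)$, and may be computed by tallying indicator changes at positions $i$, $j$, and $n-1$.

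Next I would split into cases according to whether $\sigma(n-1)=n-1$, whether $\tau(n-1)=n-1$, and whether $i=j$. The degenerate cases (both fix $n-1$, or $i=j$ with neither fixing it) give $\delta=0$ and cannot occur. If exactly one of $\sigma,\tau$ fixes $n-1$, then $\delta\le 2$, with equality forcing a coincidence such as $\sigma(j)=\tau(n-1)$; a direct calculation then shows that this makes $(n-1,\tau(n-1))$ a $2$-cycle of $\sigma^{-1}\tau$ which collapses to a fixed point under contraction. In the main case ($i\ne j$, neither fixing $n-1$), writing out $\delta$ yields the identity
\[
\delta = [\sigma(n-1)\ne\tau(n-1)] + [\sigma(n-1)=\tau(i)] + [\sigma(j)=\tau(n-1)],
\]
so $\delta=2$ forces $\sigma(n-1)\ne\tau(n-1)$ together with exactly one of $\sigma(n-1)=\tau(i)$ or $\sigma(j)=\tau(n-1)$, the alternative where all three indicators are $1$ giving $\delta=3$ and corresponding to the already-analyzed $3$-cycle case.

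Finally I would translate these coincidences into cycle information. Since $(\sigma^{-1}\tau)(n-1)=\tau(i)$ and the predecessor of $n-1$ in its cycle is $\sigma(j)$, the hypothesis $\sigma(n-1)=\tau(i)$ (the ``next'' case) places $\sigma(n-1)$ immediately after $n-1$ in the cycle $C$, while $\sigma(j)=\tau(n-1)$ (the ``previous'' case) places $\tau(n-1)$ immediately before $n-1$. In either case I would evaluate $(\sigma^{CT})^{-1}\tau^{CT}$ at the selected neighbor to verify that it becomes a fixed point, and check that on the complement of $\{\sigma(n-1),\sigma(j)\}\subset C$ the contracted map agrees with $\sigma^{-1}\tau$, thereby obtaining both (i) and the precise form of (ii). The principal obstacle is not any single deep step but the bookkeeping imposed by the paper's convention $(\pi\sigma)(i)=\sigma(\pi(i))$: one must carefully align successor with $\tau(i)$ and predecessor with $\sigma(j)$, and confirm that the contraction surgery affects $C$ at exactly one end.
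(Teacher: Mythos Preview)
Your proposal is correct and follows essentially the same case analysis as the paper: split according to whether one, both, or neither of $\sigma,\tau$ has $n-1$ in position $n-1$ (equivalently, whether $i=j$), and in the generic case read off from the two possible coincidences which neighbor of $n-1$ in $C$ is deleted. Your treatment is in fact more thorough than the paper's rather terse argument---the explicit indicator formula for $\delta$ and the verification that $(\sigma^{CT})^{-1}\tau^{CT}$ agrees with $\sigma^{-1}\tau$ away from $\{\sigma(n-1),\sigma(j)\}$ fill in details the paper leaves to the reader.
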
 

\begin{proof}
(i) follows from the fact that, if $\sigma$ and $\tau$ have $n-1$ in the same position, then 
$hd(\sigma^{CT},\tau^{CT})=hd(\sigma,\tau)$.

(ii) If one of the permutations, say $\sigma$, has $n-1$ in position $n-1$, then 
$\sigma(i)=\tau(n-1)$ where $i$ is the position of $n-1$ in $\tau$. Then $C=(\sigma(i),n-1)$ 
and it is deleted in the cycle decomposition of $(\sigma^{CT})^{-1} \tau^{CT}$. 

\[
\begin{array}{r c ccc cc}
{\rm positions} &\dots & i &\dots & n-1 \\
\sigma= & \dots & s & \dots & n-1 \\
\tau= & \dots  & n-1 & \dots & s \end{array} 
\]

Suppose that $\sigma(n-1)\ne n-1\ne \tau(n-1)$ as shown below.
\[
\begin{array}{r ccc ccc}
{\rm positions} & \dots & i &\dots & j &\dots & n-1 \\
\sigma= & \dots & n-1 & \dots & a & \dots & b \\
\tau= & \dots & c & \dots & n-1 & \dots & d \end{array} 
\]
Since $hd(\sigma^{CT},\tau^{CT})=hd(\sigma,\tau)-2$, either $b=c$ or $a=d$. 
Then $C$ is $(a\ n-1\ b\ d\dots)$  and $(b\ a\ n-1\ c\dots)$ in these cases. 
The corresponding cycles in $(\sigma^{CT})^{-1} \tau^{CT}$ are $C-\{n-1,b\}$ and $C-\{a,n-1\}$.
The lemma follows. 
\end{proof}

\begin{thm}\label{contract} 
Suppose a permutation array $P\subset S_n$ has Hamming distance $d$. 
Let $Q\subseteq S_{n-2}$ denote the permutation array obtained from $P$ by applying 
the contraction operation two times.
  \begin{enumerate}[{\rm (a)}]
    \item The Hamming distance of $Q$ is at least $d-6$.
    \item Suppose, for any $\sigma,\tau\in P$, the cycle decomposition of $\sigma^{-1} \tau$ contains no 3-cycle and no 5-cycle. Then the Hamming distance of $Q$ is at least $d-4$.
  \end{enumerate}
\end{thm}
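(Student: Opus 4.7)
The plan is to track the decrease in Hamming distance at each of the two successive contractions separately. Part (a) is immediate: the discussion preceding Lemma \ref{lem:2} shows that a single contraction decreases the Hamming distance by at most $3$, so two contractions decrease it by at most $6$, giving $hd(Q)\ge d-6$.

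For part (b), I would fix an arbitrary pair $\sigma,\tau\in P$ and let $\Delta_1,\Delta_2$ denote the decreases in Hamming distance at the first and second contractions, respectively. The goal is to prove $\Delta_1+\Delta_2\le 4$. Since $\sigma^{-1}\tau$ has no $3$-cycle, the characterization of a decrease of $3$ (from the discussion preceding Lemma \ref{lem:2}) forces $\Delta_1\le 2$. I split into two cases. If $\Delta_1\le 1$, combining with the trivial bound $\Delta_2\le 3$ immediately yields $\Delta_1+\Delta_2\le 4$. If $\Delta_1=2$, I invoke Lemma \ref{lem:2}(ii) to describe the cycle decomposition of $(\sigma^{CT})^{-1}\tau^{CT}$: the cycle $C$ of $\sigma^{-1}\tau$ containing $n-1$ is shortened by two (by removing $n-1$ and an adjacent element of $C$), while every other cycle is unchanged. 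Since $\sigma^{-1}\tau$ has no $3$-cycle no unchanged cycle is a $3$-cycle, and since $|C|\ne 5$ by hypothesis, the shortened cycle has length $|C|-2\ne 3$. Consequently $(\sigma^{CT})^{-1}\tau^{CT}$ contains no $3$-cycle at all, and the same characterization applied at the second contraction forces $\Delta_2\le 2$.

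The main obstacle will be pinning down the cycle decomposition of $(\sigma^{CT})^{-1}\tau^{CT}$ precisely when $\Delta_1=2$. Lemma \ref{lem:2}(ii) asserts that $|C|$ drops by two, but the ``orphan'' neighbor that is removed alongside $n-1$ must still appear somewhere in the decomposition of $(\sigma^{CT})^{-1}\tau^{CT}$; the expected outcome is that it becomes a fixed point. I would verify this by a direct calculation at positions $i=\sigma^{-1}(n-1)$, $j=\tau^{-1}(n-1)$ and $n-1$, checking that the only places where $(\sigma^{CT})^{-1}\tau^{CT}$ can differ from $\sigma^{-1}\tau$ are at $\sigma(j)$ and $\sigma(n-1)$, both of which lie in $C$. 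The degenerate subcases in which $\sigma$ or $\tau$ fixes $n-1$ require a short separate argument, but yield the same structural conclusion, ruling out any possibility that a new $3$-cycle in $(\sigma^{CT})^{-1}\tau^{CT}$ arises through merger or absorption of the orphan.
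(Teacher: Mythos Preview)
Your proposal is correct and follows essentially the same approach as the paper: both use Lemma \ref{lem:2} to argue that after a first contraction decreasing the distance by exactly $2$, the permutation $(\sigma^{CT})^{-1}\tau^{CT}$ has no $3$-cycle (neither the unchanged cycles nor the shortened $C$ can have length $3$ under the no-$3$-cycle/no-$5$-cycle hypothesis), forcing $\Delta_2\le 2$. The paper phrases this by contradiction whereas you argue directly by cases on $\Delta_1$, and you are more careful than the paper about the orphan element becoming a fixed point, but the core idea is identical.
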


\begin{proof}
The part (a) follows form the fact that the Hamming distance decreases by at most three for each for each contraction operation. 

The part (b) for $k=1$ follows from the above discussion. We prove part (b) for $k=2$. 
Suppose to the contrary that $hd(Q)\le d-5$. 
Let $\sigma$ and $\tau$ be two permutations of $P$ such that 
$hd(\sigma'',\tau'')\le d-5$ where $\sigma''=(\sigma')^{CT},\tau''=(\tau')^{CT}$ and 
$\sigma'=\sigma^{CT},\tau'=\tau^{CT}$.
Since $\sigma^{-1} \tau$ contains no 3-cycle, $hd(\sigma',\tau')\ge d-2$.  
Then the contraction operation on $\sigma'$ and $\tau'$ decreases the Hamming distance by exactly three. Therefore  $(\sigma')^{-1} \tau'$ contains a 3-cycle, say $(a,b,n-1)$.
\[
\begin{array}{r ccc ccc}
{\rm positions} & \dots & i &\dots & j &\dots & n-1 \\
\sigma'= & \dots & b & \dots & n-1 & \dots & a \\
\tau'= & \dots & n-1 & \dots & a & \dots & b \end{array} 
\]

Since  $\sigma^{-1} \tau$ contains no $3$-cycle and $hd(\sigma',\tau')=d-2$,  by Lemma \ref{lem:2}, $\sigma^{-1} \tau$ contains a 5-cycle. Contradiction.
\end{proof}

\begin{cor}  \label{cor4}
(i) For each prime power $n$ such that $n=2\pmod 3$, $$M(n-1,n-3)\ge n(n-1).$$
(ii) For each prime power $n$ such that $n=2\pmod 3$ and $n\neq 0,1\pmod 5$, 
$$M(n-2,n-5)\ge n(n-1).$$
\end{cor}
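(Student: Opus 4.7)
The plan is to apply the contraction machinery to the permutation array $P=AGL(1,n)$, which is a sharply $2$-transitive group of order $n(n-1)$ on $n$ symbols with Hamming distance $n-1$.

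For part (i), I would take one contraction. The preliminary discussion preceding Theorem \ref{contract} shows that a single contraction decreases the Hamming distance by at most $3$, and only by at most $2$ when no $\sigma^{-1}\tau$ contains a $3$-cycle. Since $n \equiv 2 \pmod 3$, neither $n$ nor $n-1$ is divisible by $3$, so $3 \nmid |P|$. By Lagrange's theorem no element of $P$ has order divisible by $3$, and therefore the cycle decomposition of any $\sigma^{-1}\tau$ contains no $3$-cycle. Applying contraction once yields a PA $P^{CT}\subseteq S_{n-1}$ of Hamming distance at least $(n-1)-2 = n-3$. Since this distance is positive (for $n\ge 4$) the map $\sigma\mapsto \sigma^{CT}$ is injective on $P$, so $|P^{CT}|=n(n-1)$, giving $M(n-1,n-3)\ge n(n-1)$.

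For part (ii), I would apply contraction twice and invoke Theorem \ref{contract}(b). The additional hypothesis $n\not\equiv 0,1\pmod 5$ says $5\nmid n$ and $5\nmid (n-1)$, hence $5\nmid |P|$. Combined with $3\nmid |P|$ from part (i), Lagrange's theorem guarantees that no element of $P$ has order divisible by $3$ or $5$, so no $\sigma^{-1}\tau$ has a $3$-cycle or a $5$-cycle in its cycle decomposition. Theorem \ref{contract}(b) then yields a PA $Q\subseteq S_{n-2}$ of Hamming distance at least $(n-1)-4=n-5$, and again injectivity of the double contraction on $P$ is automatic once $n\ge 6$, giving $M(n-2,n-5)\ge n(n-1)$.

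There is no real obstacle; the entire argument is a bookkeeping exercise that translates the congruence conditions on $n$ into divisibility conditions on $|AGL(1,n)|=n(n-1)$ and then invokes Lagrange to rule out short cycles. The only point warranting care is verifying that $|P^{CT}|=|P|$ (respectively $|Q|=|P|$), which follows immediately since the resulting Hamming distances remain strictly positive in the relevant range of $n$.
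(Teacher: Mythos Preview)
Your proposal is correct and follows essentially the same route as the paper: apply contraction to $AGL(1,n)$, translate the congruence hypotheses on $n$ into $3\nmid n(n-1)$ (and, for (ii), $5\nmid n(n-1)$), invoke Lagrange to exclude $3$-cycles (and $5$-cycles), and then appeal to the discussion before Theorem~\ref{contract} for (i) and to Theorem~\ref{contract}(b) for (ii). Your extra remark that positivity of the resulting Hamming distance forces injectivity of the contraction map (so the array really has $n(n-1)$ elements) is a nice point the paper leaves implicit.
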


\begin{proof}
The bound can be obtained by applying the contraction operation on 
the affine general linear group $AGL(1,n)$. Let $\sigma$ and $\tau$ be any 
two distinct permutations from $AGL(1,n)$. Since $\sigma^{-1}\tau\in AGL(1,n)$ and
the order of $AGL(1,n)$ is not multiple of 3, the bound of (i) follows. 
The bound of (ii) follows from Theorem \ref{contract} (b) and the fact that 
the order of $AGL(1,n)$ is not multiple of 3 and 5.
\end{proof}

Infinitely many new bounds can be obtained from Corollary \ref{cor4} (c).  
We show some examples. 
\begin{align*}
M(31,29)&\ge 32\cdot 31=992 \text{ \quad for $n=32$},\\
M(40,38)&\ge 41\cdot 40=1640 \text{ \quad for $n=41$},\\
M(46,44)&\ge 47\cdot 46=2162 \text{ \quad for $n=47$}.
\end{align*}

\begin{thm}  \label{cor:mod3b}
For each prime power $q$ such that $q=2\pmod 3$, $$M(q,q-3)\ge (q+1)q(q-1).$$
\end{thm}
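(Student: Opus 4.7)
The plan is to apply a single contraction to $PGL(2,q)$. Since $PGL(2,q)$ is sharply $3$-transitive on $q+1$ symbols, it has order $(q+1)q(q-1)$ and pairwise Hamming distance $(q+1)-3+1=q-1$. Realizing the symbols as $\F_q\cup\{\infty\}$ and contracting at $\infty$, I would show that the Hamming distance decreases by at most $2$, so that the contracted array lies in $S_q$, has $(q+1)q(q-1)$ elements, and has Hamming distance at least $q-3$.

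For distinct $\sigma,\tau\in PGL(2,q)$, set $\pi=\sigma^{-1}\tau$. Sharp $3$-transitivity forces $\pi$ to fix at most two points, so $hd(\sigma,\tau)=(q+1)-|\Fix(\pi)|\in\{q-1,q,q+1\}$. If $|\Fix(\pi)|\le 1$, then $hd(\sigma,\tau)\ge q$, and since a single contraction decreases the Hamming distance by at most $3$, the resulting pair has Hamming distance at least $q-3$ automatically. The only case requiring further work is when $\pi$ has exactly two fixed points.

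For this ``hyperbolic'' case the key observation is that the pointwise stabilizer of any two points in $PGL(2,q)$ is conjugate (using $2$-transitivity) to the stabilizer of $\{0,\infty\}$, which consists of the maps $x\mapsto ax$ with $a\in\F_q^*$ and is therefore cyclic of order $q-1$. Hence the order of $\pi$ divides $q-1$. Because $q\equiv 2\pmod 3$, we have $3\nmid q-1$, so the order of $\pi$ is coprime to $3$, and the cycle decomposition of $\pi$ therefore contains no $3$-cycle at all. By the characterization immediately preceding Lemma \ref{lem:2} (a decrease of three occurs if and only if $\pi$ contains a $3$-cycle through the contracted symbol), contraction then decreases the Hamming distance by at most $2$, yielding $hd(\sigma^{CT},\tau^{CT})\ge (q-1)-2=q-3$.

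Finally, the contraction is injective on $PGL(2,q)$ for $q\ge 5$, since any two permutations with the same image differ by Hamming distance at most $3<q-1$; so the contracted array has cardinality exactly $(q+1)q(q-1)$, establishing $M(q,q-3)\ge (q+1)q(q-1)$. The main obstacle I anticipate is isolating the hyperbolic case as the only one that needs real work and identifying the two-point stabilizer as cyclic of order precisely $q-1$, which is what turns the residue condition $q\equiv 2\pmod 3$ into exactly the divisibility hypothesis needed to rule out $3$-cycles.
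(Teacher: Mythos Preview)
Your proof is correct and follows essentially the same approach as the paper: contract $PGL(2,q)$ once, split into cases according to the number of fixed points of $\sigma^{-1}\tau$, and in the critical case invoke a stabilizer whose order is coprime to $3$ (you use the two-point stabilizer, cyclic of order $q-1$; the paper uses the one-point stabilizer $AGL(1,q)$ of order $q(q-1)$) to rule out $3$-cycles and hence limit the Hamming-distance drop to $2$. Your explicit verification that contraction is injective on $PGL(2,q)$ is a welcome detail that the paper leaves implicit.
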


\begin{proof}
Let $A_q=(PGL(2,q+1))^{CT}$. It suffices to show that $hd(A_q)\ge q-3$.
Consider two permutations $\sigma$ and $\tau$ from $PGL(2,q+1)$.
If $hd(\sigma,\tau)=q+1$ then $hd(\sigma^{CT},\tau^{CT})\ge q-2$.
Suppose that $hd(\sigma,\tau)\le q$. Then $\sigma^{-1}\tau$ has a fixed point $k$.
A point stabilizer of $PGL(2,n)$ is isomorphic to the affine general linear group $AGL(1,q)$.
Since its order is not multiple of 3, $\sigma^{-1}\tau$ has no 3-cycle. Therefore
$hd(\sigma^{CT},\tau^{CT})\ge q-3$ and the claim follows.
\end{proof}

An example: $M(32,29)\ge 33\cdot 32\cdot 31=32736$.

\section{Experimental Results}

Computer searches for arrays of permutations on $Z_n$ with pairwise Hamming
distance d are difficult. The number of all permutations, namely $n!$, is large, even for
small values of $n$. Many previous search methods have used automorphism groups
to factor the space of all permutations into collections of sets of permutations with
considerably smaller cardinality \cite{ccd04,jlos-15,sm12}. 
With this smaller cardinality, one constructs a graph $G(n,d)$, whose nodes 
correspond to sets of permutations, with an edge between two sets $S_1$ and $S_2$ 
if the Hamming distance between permutations in the 
sets is at least $d$ \cite{ccd04,jlos-15,sm12}. 
One then uses a program to find a large clique in $G(n,d)$.

We give a different type of algorithm. We call it the {\em coset method}. 
We start with a group $G$ that forms a good PA for $M(n,d')$, where $d'>d$. 
As exact values for $M(q,q-1)$ arise from sharply 2-transitive groups $AGL(1,q)$, and exact values of $M(q+1,q-1)$ arise from sharply 3-transitive groups $PGL(2,q)$, where $q$ is a power of a prime, one easily finds a group with which to start. 
Also, the exact value of $M(n,n)$, for all $n$, arises from a cyclic group of order $n$. 
So, in fact, there is always a group with which to start. 
We search for a permutation, say $\pi$, at Hamming distance at least $d$ from $G$.
It follows that the entire coset $\pi G$ is at distance at least $d$ from $G$. 
That is, 
\begin{align*} 
hd(\pi G,G)&=\max\{ hd(\pi g_1,g_2)~|~g_1,g_2\in G\}\\
&= \max\{hd(\pi,g_2g_1^{-1}~|~g_1,g_2\in G\}\\
&=\max\{hd(\pi,g)~|~g\in G\},
\end{align*}
since $g_2g_1^{-1}\in G$ by properties of a group. 
So, one has a PA of cardinality $2\cdot |G|$, by finding a single permutation $\pi$. 
Moreover, recording the coset representative $\pi$ is sufficient, as all other 
permutations in the coset are obtained from the group $G$ and $\pi$.

The process can be iterated. Suppose we have found $k$ coset representatives
$\pi_0,\pi_1,\dots,\pi_{k-1}$, where $\pi_0$ is the identity permutation, so $\pi_0 G$ is the group $G$ and the collection of all such cosets is a PA of cardinality $k\cdot |G|$. 
One can continue by finding a permutation $\pi_k$ such that $hd(\pi_k,\pi_iG)\ge d$, for all $0\le i<k$. 
Such a permutation gives us a new coset, namely $\pi_k G$, and a PA of cardinality $(k+1)\cdot |G|$.

One implementation of this method guesses a new permutation $\pi_k$ randomly and
then checks (1) that $hd(\pi_k,\pi_iG)\ge d$, for all $0\le i<k$, and 
(2) $hd(\pi_k,\pi_iG)=d$, for some $i$ ($0\le i<k$).
The second condition may need explanation. 
Recall the combinatorial Gilbert-Varshimov (GV) lower bound \cite{fd77,ms-tecc-77,ph-hct-98} for $M(n,d)$, namely $N_{GV}(n,d)$, 
where $N_{GV}(n,d)$ is given by:
\begin{equation} \label{GV}
N_{GV}(n,d)=\frac{n!}{V(n,d-1)},
\end{equation}
where $V(n,d-1)$, the number of permutations that are at distance less than $d$ from a given
permutation, e.g. the identity, is $V(n,d-1)=\sum_{k=0}^{d-1}{n\choose k} D_k$, where $D_k$ is the number of {\em derangements} on $k$ symbols.
Note that the ratio given in (\ref{GV}) is a lower bound for the number of
times one can select another permutation without choosing two with Hamming
distance less than $d$. 
It is calculated with the assumption that all sets of 
permutations too close to different permutations are disjoint. 
(It should be noted that one can often get far more than $N_{GV}(n,d)$ permutations in a PA, 
because the assumption that the spheres are disjoint is not required. 
In fact, this combinatorial lower bound has recently been improved \cite{gao13}.)
Of course, such sets need not be disjoint and one can choose a larger set of 
permutations by eliminating this condition. 
We do this by requiring each new permutation selected is at distance {\em exactly} $d$
to one chosen before.

This simple technique makes it feasible to compute and verify large PA's. 
Previously, the implicit algorithm justifying the GV lower bound was not 
considered practical, due to large space requirements for keeping track 
of permutations already chosen and the large time needed for computing 
Hamming distances. 
As we have seen in Lemma \ref{lem:id}, when the PA is a group and some 
of its cosets, one need not check the distance between every pair of 
permutations and one can store the set of coset representatives instead of 
the set of all permutations. 
It is worth noting that in \cite{ccd04}, the authors computed a PA of 
size 58,322 for $M(16,9)$ and stated that this lower bound is not as good 
as what is given by the GV lower bound\footnote{Note that $N_{GV}(16,9)=97,579$.}. 
They stated, "... However, the GV lower bound is not constructive." 
On the other hand, the coset method starting with the group $G=AGL(1,16)$ 
found $5,739$ cosets of $G$ for Hamming distance 9 and, hence, obtained 
a lower bound of $1,377,360$ for $M(16,9)$.

Also, in \cite{sm12} the authors computed a PA of size $20,908,800$ for $M(12,4)$ and then
stated "... it is too large to check fully, but has been extensively checked." 
In contrast, the coset method starting with the Mathieu group $G$ of order 12 
\cite{cameron99,conway85,dixon96} found 638 cosets of $G$ for Hamming distance 4 and, hence, obtained a lower bound of $60,635,520$ for $M(12,4)$. 
Furthermore, since testing of correctness of $G$ and its cosets takes far less
time using group-theoretic properties, verification was done in a few minutes using
a computer.

Verifying the Hamming distance of a PA $A$ on $Z_n$ of size $N$ generally involves
computing ${N\choose 2}$ pairs of permutations of $n$ symbols, which is $O(N^2n)$ time. When $A$ is a group, using Lemma \ref{lem:id}, one only need compute 
the distance between the identity and the other $N-1$ elements, so $O(Nn)$ time. 
In fact, if $A$ is a group consisting of the identity permutation and its cyclic shifts or $A$ consists of this cyclic group and, say, $k$ of its cosets, the computation time is reduced to $O(N)$ and $O(kN)$, respectively. We discuss testing algorithms in Section \ref{testing}.

The coset method has been used to obtain several new lower bounds. 
Many of the new lower bounds are given in Tables 1 and 2. 
The PA's that justify the lower bounds are available at our web site \cite{ourtable}. 
Sometimes finding a suitable coset representative takes considerable computation time and, hence, when found it should be recorded. 
For example, when started with the group $G=PGL(2,19)$ the 
coset method with difficulty found one coset of $G$ for Hamming distance 16. 
It can be described by one of its representatives, for example:
$$1,2,3,4,6,8,15,5,19,18,10,7,17,16,12,20,9,11,13,14.$$
Thus, $M(20,16) \ge 13,680$. Some other lower bounds obtained are:

$M(13,5) \ge 10,454,400$

$M(13,6) \ge 1,805,760$

$M(13,7) \ge 380,160$

$M(14,5) \ge 60,445,440$

$M(14,6) \ge 10,834,560$

$M(14,7) \ge 1,900,800$

$M(14,8) \ge 380,160$

$M(14,9) \ge 21,840$

$M(15,6) \ge 58,734,720$

$M(15,7) \ge 15,491,520$

$M(15,8) \ge 1,900,800$

$M(15,9) \ge 181,272$

$M(15,10) \ge 32,760$

$M(16,7) \ge 70,709,760$

$M(16,8) \ge 16,061,760$

$M(16,9) \ge 1,377,360$

$M(16,10) \ge 164,880$.

Others can be found in Tables 1 and 2 and at the website.

\subsection{Efficient Testing of a New Permutation} \label{testing}

Suppose that we have a permutation array $P$ consisting of $k$ left cosets of a group $G$, i.e. 
$P=\cup_{i=1}^k P_i$.
The critical step of our randomized construction is the computation of distance $hd(\pi,P)$ where $\pi$ is a random permutation. 
The definition of Hamming distance suggests the computation  
$hd(\pi,P)=\min_{\sigma\in P} hd(\pi,\sigma)$. 
Then the running time is $O(n|P|)=O(nk|G|)$. 
We show that, if $G$ is cyclic or has the cyclic subgroup, the
algorithm for testing $\pi$ can be improved. Let $C_n$ denote the
cyclic group, i.e. 
$$C_n=\{g_j~|~g_j=(j,j+1,\dots,n-1,0,1,\dots,j-1), j\in Z_n\}.$$ 

\begin{lem} \label{testC1}
The Hamming distance from a permutation $\pi\in S_n$ to $C_n$ can be
computed in $O(n)$ time. 
\end{lem}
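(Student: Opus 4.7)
The plan is to exploit the fact that every element of $C_n$ is a translation modulo $n$. Writing $g_j$ in one-line notation gives $g_j(i)\equiv i+j\pmod n$, so the number of positions where $\pi$ and $g_j$ agree is exactly
\[
|\{i\in Z_n : \pi(i)-i\equiv j\pmod n\}|,
\]
and hence $hd(\pi,g_j)=n-|\{i : \pi(i)-i\equiv j\pmod n\}|$. Taking the minimum over $j$ gives
\[
hd(\pi,C_n)=n-\max_{0\le j<n}\,|\{i\in Z_n : \pi(i)-i\equiv j\pmod n\}|.
\]

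The algorithm is then immediate. First I would initialize an integer array $c[0],\ldots,c[n-1]$ to zero, which takes $O(n)$ time. Next I would traverse $i=0,1,\ldots,n-1$, compute $d_i=(\pi(i)-i)\bmod n$ in constant time, and increment $c[d_i]$. After this pass the value $c[j]$ equals the number of fixed points of $\pi g_j^{-1}$. Finally, a single linear scan of $c$ finds $M=\max_j c[j]$, and the algorithm returns $n-M$. The total running time is $O(n)$ and uses $O(n)$ additional space.

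Correctness is essentially a one-line verification from the displayed identity above, so no substantive obstacle arises; the content of the lemma is really the observation that the minimization over $j$ collapses to reading off the mode of the multiset $\{(\pi(i)-i)\bmod n\}_{i\in Z_n}$, rather than comparing $\pi$ with each of the $n$ shifts separately (which would be the naive $O(n^2)$ procedure). I would conclude by remarking that this preprocessing idea will be the template extended in the sequel whenever $G$ contains $C_n$ as a subgroup, since computing $hd(\pi,\sigma C_n)$ for a coset representative $\sigma$ reduces, after replacing $\pi$ by $\sigma^{-1}\pi$, to exactly the problem solved here.
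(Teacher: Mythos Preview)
Your proposal is correct and follows essentially the same approach as the paper: both compute, for each $j$, the number of indices $i$ with $(\pi(i)-i)\bmod n = j$ in a single linear pass and return $n$ minus the maximum count. The only cosmetic difference is that the paper initializes its array to $n$ and decrements (storing $hd(\pi,g_j)$ directly), whereas you initialize to $0$ and increment (storing the agreement count), then subtract from $n$ at the end.
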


\begin{proof}
The Hamming distance $hd(\pi,C_n)$ can be computed as 
\[ hd(\pi,C_n)=\min_{0\le j\le n-1}\{hd(\pi,g_j)\}. \]
The straightforward computation of $hd(\pi,g_j)$ takes $O(n)$ time.
We show that it can be computed in $O(1)$ amortized time.

Let $D[0..n-1]$ be an array and we store a tentative distance of
$hd(\pi,g_j)$ in $D[j]$. 
The algorithm has 2 steps.
\begin{enumerate}
\item 
We initialize $D[j]=n$ for all $j\in Z_n$.
\item 
For each $m\in Z_n$, subtract one from $D[j]$ where $j\equiv (\pi(m)-m)\bmod{n}$.
\end{enumerate}
Clearly, the running time is $O(n)$.
We show that the algorithm is correct.
If $D[j]$ decreases for some $m$, then the $m$-th element of $\pi$ is $j+m(\bmod\ n)$.
Thus, $\pi$ and $g_j$ have matching elements in $m$-th position. 
In the end, $D[j]$ is equal to $n-n'$ where $n'$ is the number of matching elements of 
$\pi$ and $g_j$. The claim follows since $hd(\pi,g_j)=n-n'$.
\end{proof}

\begin{lem} \label{testC}
Suppose that $G=C_n$ and let $\pi_i, (i=1,2,\dots,k)$ be a representative of the coset $P_i$, i.e. $P_i=\pi_iC_n$.
Then, for any permutation $\pi\in S_n$, the Hamming distance $hd(\pi,P)$ can be computed in $O(kn)$ time.
\end{lem}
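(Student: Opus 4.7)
The plan is to reduce the problem to $k$ independent instances of the single-coset computation already handled by Lemma~\ref{testC1}. Since $P = \bigcup_{i=1}^{k}\pi_i C_n$, we have $hd(\pi,P) = \min_{1\le i\le k} hd(\pi,\pi_i C_n)$, so it suffices to evaluate each $hd(\pi,\pi_i C_n)$ in $O(n)$ time.

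To achieve this, I would invoke the left-invariance of Hamming distance noted earlier in the paper, namely $hd(\sigma\tau,\sigma\rho) = hd(\tau,\rho)$. Applied with $\sigma = \pi_i^{-1}$, it yields $hd(\pi,\pi_i g_j) = hd(\pi_i^{-1}\pi,g_j)$ for every $j\in Z_n$, and hence
$$hd(\pi,\pi_i C_n) = hd(\pi_i^{-1}\pi,C_n).$$
This converts each per-coset computation into a single distance-to-$C_n$ query, which Lemma~\ref{testC1} answers in $O(n)$ time.

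The algorithm therefore proceeds as follows. Precompute $\pi_i^{-1}$ for each $i$ once in total time $O(kn)$. For each $i = 1,\dots,k$, compute the permutation $\pi_i^{-1}\pi$ in $O(n)$ time, then call the routine of Lemma~\ref{testC1} on it to obtain $hd(\pi_i^{-1}\pi, C_n)$ in $O(n)$ time. Returning the minimum of the $k$ values yields $hd(\pi,P)$ within the claimed $O(kn)$ total time bound.

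I do not foresee a real obstacle; the only point requiring care is verifying that the left-multiplication identity applies correctly under the composition convention $(\pi\sigma)(i)=\sigma(\pi(i))$ used in the paper, so that multiplying $\pi_i g_j$ on the left by $\pi_i^{-1}$ really cancels to $g_j$. Once this is in place, the running-time analysis is immediate from Lemma~\ref{testC1}.
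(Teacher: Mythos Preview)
Your proposal is correct and follows essentially the same approach as the paper: decompose $hd(\pi,P)$ as a minimum over the $k$ cosets, use left-invariance to rewrite $hd(\pi,\pi_iC_n)=hd(\pi_i^{-1}\pi,C_n)$, and invoke Lemma~\ref{testC1} for each $i$. The paper's proof is slightly terser (it does not spell out the precomputation of inverses or the convention check), but the argument is the same.
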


\begin{proof}
The Hamming distance $hd(\pi,P)$ can be computed as 
\[ hd(\pi,P)=hd(\pi,\mathop\cup_{1\le i\le k}P_i)=
\min_{1\le i\le k}
\{hd(\pi,P_i\}. \]

It suffices to show that $d_i=hd(\pi,P_i)$ can be computed in $O(n)$ time, for any $i$.
Since $P_i=\pi_iC_n$ and $d_i=hd(\pi_i^{-1}\pi, C_n)$, we first compute $\sigma=\pi_i^{-1}\pi$ and then 
$hd(\sigma,C_n)$ using the algorithm from Lemma \ref{testC1}.
\end{proof}

We generalize Lemma \ref{testC} as follows.

\begin{thm}
If $C_n$ is a subgroup of $G$ then the Hamming distance $hd(\pi,P)$
can be computed in $O(k|G|)$ time for any permutation $\pi\in S_n$. 
\end{thm}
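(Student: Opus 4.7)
The plan is to reduce to Lemma \ref{testC} by decomposing $G$ into left cosets of its cyclic subgroup $C_n$. Since $C_n\le G$, we can write $G$ as a disjoint union of $m=|G|/n$ left cosets of $C_n$, say $G=\bigcup_{j=1}^{m}\sigma_j C_n$, where $\sigma_1=e$ and $\sigma_2,\ldots,\sigma_m$ form a system of left coset representatives (precomputed once and independent of the query $\pi$).

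First, I would expand each coset $P_i=\pi_i G$ as
\[
P_i \;=\; \pi_i\!\bigcup_{j=1}^{m}\sigma_j C_n \;=\; \bigcup_{j=1}^{m}(\pi_i\sigma_j)\,C_n,
\]
so the whole array $P=\bigcup_{i=1}^{k}P_i$ is realized as a union of exactly $km=k|G|/n$ left cosets of $C_n$, with representatives $\{\pi_i\sigma_j : 1\le i\le k,\,1\le j\le m\}$. These representatives can be tabulated in $O(km\cdot n)=O(k|G|)$ time and space.

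Next, I would apply Lemma \ref{testC} verbatim to the cosets of $C_n$ produced above. By that lemma, for each representative $\pi_i\sigma_j$ the distance $hd(\pi,(\pi_i\sigma_j)C_n)$ can be computed in $O(n)$ time (via the array trick of Lemma \ref{testC1} applied to $(\pi_i\sigma_j)^{-1}\pi$). Taking the minimum over all $km$ representatives yields $hd(\pi,P)$ in total time $O(km\cdot n)=O(k|G|)$, as desired.

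There is essentially no obstacle beyond this reduction; the only minor bookkeeping point is to ensure that the preprocessing of the coset representatives $\sigma_j$ of $C_n$ inside $G$ is charged to a one-time cost (not to each query $\pi$), so that the bound $O(k|G|)$ is the per-query cost. Since the statement concerns computing $hd(\pi,P)$ for an arbitrary query $\pi$ given the fixed data $(G,\pi_1,\ldots,\pi_k)$, this is exactly the regime in which the preprocessing is free.
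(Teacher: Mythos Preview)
Your proposal is correct and follows essentially the same route as the paper: decompose $G$ into $m=|G|/n$ left cosets of $C_n$, expand $P$ into the $km$ cosets $(\pi_i\sigma_j)C_n$, compute each distance $hd((\pi_i\sigma_j)^{-1}\pi,C_n)$ in $O(n)$ time via Lemma~\ref{testC1}, and sum to $O(kmn)=O(k|G|)$. The only cosmetic difference is that the paper invokes Lemma~\ref{testC1} directly rather than passing through Lemma~\ref{testC}, and it does not discuss the preprocessing issue you raise.
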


\begin{proof}
Let $G/C_n=\{\sigma C_n~|~\sigma\in G\}$ be the set of left cosets of $C_n$ in $G$. 
Let $\sigma_1,\sigma_2,\dots,\sigma_m$ be representatives of these cosets.
The Hamming distance $hd(\pi,P)$ can be computed as 
$$hd(\pi,P)=hd(\pi,\mathop\bigcup_{1\le i\le k}\pi_iG)=
hd(\pi,\mathop\bigcup_{\substack{1\le i\le k\\ 1\le t\le m}}\pi_i\sigma_tC_n)=
\min_{\substack{1\le i\le k\\ 1\le t\le m}}
\{hd(\pi,\pi_i\sigma_tC_n)\}.$$
Since $hd(\pi,\pi_i\sigma_tC_n)=hd(\alpha,C_n)$ where $\alpha=(\pi_i\sigma_t)^{-1}\pi$. 
The Hamming distance $hd(\alpha,C_n)$ can be computed in $O(n)$ time using the algorithm from Lemma \ref{testC1}. 
The total time for computing $hd(\pi,P)$ is $O(kmn)$. It can be written as $O(k|G|)$ since $|G|=mn$.
\end{proof}

\section{New tables and conclusions}

We give in Tables 1 and 2 an updated partial list of lower bounds for $M(n,d)$, 
with $n\ge 9$ and $d\ge 4$.
We also created a webpage \cite{ourtable} that allows one to obtain the PA's, or 
coded versions of them, for verification.
Not all of our new results appear in the table, many are described by theorems in Section 2.

We use the following notation in the table to describe how the results are obtained:

\vspace{-12pt}
\begin{description} \setlength{\itemsep}{-5pt}
\item[{\bf a}] - a bound derived from $M(n,d-1)\ge M(n,d)$.
\item[{\bf b}] - a bound derived from $M(n+1,d)\ge M(n,d)$. 
\item[\bf d] - a bound derived from $M(n-1,d)\ge M(n,d)/n$. 
For example, we have $M(15,12)\ge 2520$, because $M(16,12)\ge 40320$.
\item[{\bf t}] - the Gilbert-Varshamov lower bound \cite{fd77,ms-tecc-77,ph-hct-98}. 
\item[{\bf g}] - a lower bound based on known permutation groups. 
This includes the Mathieu groups $M_{11}, M_{12},M_{22},M_{23},M_{24}$
and groups $AGL(1,n)$ or $AGL(2,n)$, which are sharply 
2-transitive and sharply 3-transitive, respectively. 
\item[{\bf m}] - a bound derived from mutually orthogonal Latin squares \cite{ckl04}. 
\item[{\bf u}] - a result obtained by {\em partitioning and extending}, which is 
contained in \cite{bms-epa-16}. 
\item[{\bf r}] - a result obtained by the coset method using random search. 
Our improved lower bounds are in bold. 
The previous lower bounds are given at the bottom of the cell.
When a lower bound is obtained by the coset technique, the number of cosets is given at the cell.
\item[{\bf c}] - a result obtained by {\em contraction}, which is an 
improvement of the $M(n-1,d-3)\ge M(n,d)$ result, described in \cite{ycy-08}.
For example, $M(23,20)\ge 12144$ comes from $M(24,22)\ge 12144$.
\end{description}
\vspace{-4mm}
We use capital letters for the following references:
A - \cite{ccd04}, B - \cite{gao13}, C - \cite{jlos-15}, D - \cite{js08}, 
E - \cite{sm12}, F - \cite{ycy-08}, G - \cite{bms-psp-16}.

We note that $M(9,6)\ge 1512$ and $M(17,14)\ge 8160$
follow from Corollaries 1 and 2. 

\newpage
\newgeometry{left=3cm,top=0.1cm,bottom=0.1cm}
\begin{table} 
\caption{Bounds for $M(n,d)$, where the rows are labeled with $n$ referring to 
the set $S_n$, which the permutations are defined on, and the columns $d$ are the Hamming distance.}
{\small\sffamily
\rotatebox{90}{
\begin{tabular}{|r|*{13}{@{\hspace{1pt}}r@{\hspace{1pt}}|}}
\hline
 & d=4 & d=5 & d=6 & d=7 & d=8 & d=9 & d=10 & d=11 & d=12 & d=13 & d=14 & d=15 & d=16 \\ \hline 
9 & -& -& -& -& -& -& -& -& -& -& -& -& -\\ 
& -& -& -& -& -& -& -& -& -& -& -& -& -\\ 
& 18576$_{ C }$& 3024$_{ E }$& 1512$_{ E }$& 504$_{ g }$& 72$_{ g }$& 9$_{}$& -& -& -& -& -& -& -\\ \hline 
10 & -& -& -& -& -& -& -& -& -& -& -& -& -\\ 
& -& -& -& {\bf 1504}$_u$& -& -& -& -& -& -& -& -& -\\ 
& 150480$_{ E }$& 19440$_{ C }$& 8640$_{ E }$& 1484$_{ C }$& 720$_{ g }$& 49$_{ D }$& 10$_{}$& -& -& -& -& -& -\\ \hline 
11 & -& -& -& -& -& -& -& -& -& -& -& -& -\\ 
& -& -& -& -& -& -& -& -& -& -& -& -& -\\ 
& 1742400$_{ E }$& 205920$_{ E }$& 95040$_{ E }$& 7920$_{ a }$& 7920$_{ g }$& 297$_{ C }$& 110$_{ g }$& 11$_{}$& -& -& -& -& -\\ \hline 
12 & -& -& -& -& -& -& -& -& -& -& -& -& -\\ 
& -& -& -& -& -& -& -& -& -& -& -& -& -\\ 
& 20908800$_{ E }$& 2376000$_{ E }$& 190080$_{ E }$& 95040$_{ a }$& 95040$_{ g }$& 1320$_{ a }$& 1320$_{ g }$& 112$_{ C }$& 12$_{}$& -& -& -& -\\ \hline 
13 & 638 & 110 & 20 & 4 & -& -& -& -& -& -& -& -& -\\ 
& {\bf 60635520}$_r$& {\bf 10454400}$_r$& {\bf 1900800}$_r$& {\bf 380160}$_r$& -& -& -& -& -& -& -& -& -\\ 
& 41712480$_{ A }$& 2376000$_{ b }$& 271908$_{ t }$& 95040$_{ a }$& 95040$_{ b }$& 6474$_{ C }$& 1320$_{ b }$& 276$_{ C }$& 156$_{ g }$& 13$_{}$& -& -& -\\ \hline 
14 & -& 636 & 115 & 21 & 4 & 10 & -& -& -& -& -& -& -\\ 
& -& {\bf 60445440}$_r$& {\bf 10929600}$_r$& {\bf 1995840}$_r$& {\bf 380160}$_r$& {\bf 21840}$_r$& -& -& -& -& -& -& -\\ 
& 550368000$_{ A }$& 22767826$_{ B }$& 890338$_{ t }$& 97547$_{ t }$& 95040$_{ b }$& 6552$_{ a }$& 8736$_{ C }$& 2184$_{ a }$& 2184$_{ g }$& 59$_{ C }$& 14$_{}$& -& -\\ \hline 
15 & -& -& 618 & 163 & 20 & 83 & 15 & -& -& -& -& -& -\\ 
& -& -& {\bf 58734720}$_r$& {\bf 15491520}$_r$& {\bf 1900800}$_r$& {\bf 181272}$_r$& {\bf 32760}$_r$& -& -& -& -& -& -\\ 
& 4.01E9$_B$& 263832788$_{ B }$& 8991655$_{ t }$& 888533$_{ t }$& 97572$_{ t }$& 12014$_{ t }$& 6076$_{ a }$& 7540$_{ C }$& 2520$_{ d }$& 315$_{ C }$& 90$_{ C }$& 15$_{}$& -\\ \hline 
16 & -& -& -& 744 & 169 & 5739 & 687 & -& -& -& -& -& -\\ 
& -& -& -& {\bf 70709760}$_r$& {\bf 16061760}$_r$& {\bf 1377360}$_r$& {\bf 164880}$_r$& -& -& -& -& -& -\\ 
& 1.268E11$_B$& 3.317E9$_B$& -& 8972298$_{ t }$& 888755$_{ t }$& 97579$_{ t }$& 40320$_{ a }$& 40320$_{ a }$& 40320$_{ E }$& 1376$_{ C }$& 1376$_{ C }$& 240$_{ g }$& 16$_{}$\\ \hline 
17 & -& -& -& -& 791 & 2298 & 303 & 46 & -& 3 & -& -& -\\ 
& -& -& -& -& {\bf 75176640}$_r$& {\bf 9375840}$_r$& {\bf 1236240}$_r$& {\bf 187680}$_r$& -& {\bf 12240}$_r$& {\bf 8160}& -& -\\ 
& 7.93E11$_B$& -& -& -& 8974885$_{ t }$& 888727$_{ t }$& 97569$_{ t }$& 12014$_{ t }$& 83504$_{ A }$& 4080$_{ a }$& 4080$_{ a }$& 4080$_{ g }$& 272$_{ g }$\\ \hline 
\end{tabular}}}
\label{table1} 
\end{table} 

\newpage
\newgeometry{left=3cm,top=0.1cm,bottom=0.1cm}
\begin{table}
\caption{Bounds for $M(n,d)$, where the rows are labeled with $n$ referring to 
the set $S_n$, which the permutations are defined on, and the columns $d$ are the Hamming distance.}
{\small\sffamily
\rotatebox{90}{
\begin{tabular}{|r|*{14}{@{\hspace{1pt}}r@{\hspace{1pt}}|}}
\hline
 & d=11 & d=12 & d=13 & d=14 & d=15 & d=16 & d=17 & d=18 & d=19 & d=20 & d=21 & d=22 & d=23 & d=24 \\ \hline 
18 & 251 & 40 & 5 & -& -& -& -& -& -& -& -& -& -& -\\ 
& {\bf 1228896}$_r$& {\bf 195840}$_r$& {\bf 24480}$_r$& {\bf 12240}$_u$& {\bf 8160}$_u$& -& -& -& -& -& -& -& -& -\\ 
& 97569$_{ t }$& 83504$_{ t }$& 4896$_{ a }$& 4896$_{ a }$& 4896$_{ a }$& 4896$_{ g }$& 90$_{ C }$& 18$_{}$& -& -& -& -& -& -\\ \hline 
19 & -& 3572 & 486 & -& -& -& -& -& -& -& -& -& -& -\\ 
& -& {\bf 1221624}$_r$& {\bf 166212}$_r$& -& {\bf 12240}$_G$& -& -& -& -& -& -& -& -& -\\ 
& 888729$_{ t }$& 97569$_{ t }$& 65322$_{ a }$& 65322$_{ F }$& -& 4896$_{ b }$& 342$_{ a }$& 342$_{ g }$& 19$_{}$& -& -& -& -& -\\ \hline 
20 & -& 1299 & 215 & 28 & 3 & 2 & -& -& -& -& -& -& -& -\\ 
& -& {\bf 8885160}$_r$& {\bf 1470600}$_r$& {\bf 191520}$_r$& {\bf 20520}$_r$& {\bf 13680}$_r$& -& -& -& -& -& -& -& -\\ 
& 8974608$_{ t }$& 888729$_{ t }$& 97569$_{ t }$& 12014& 65322$_{}$& 6840$_{ a }$& 6840$_{ a }$& 6840$_{ g }$& 120$_{ C }$& 20$_{}$& -& -& -& -\\ \hline 
21 & -& -& -& 174 & 23 & 5 & 2 & -& -& -& -& -& -& -\\ 
& -& -& -& {\bf 1190160}$_r$& {\bf 157320}$_r$& {\bf 34200}$_r$& {\bf 13680}$_u$& -& {\bf 333}$_G$& -& -& -& -& -\\ 
& -& -& 888729$_{ t }$& 97569$_{ t }$& 65322$_{ b }$& 6840$_{ a }$& 6840$_{ a }$& 6840$_{ b }$& 147$_{ a }$& 147$_{ C }$& 21$_{}$& -& -& -\\ \hline 
22 & -& -& -& -& 47233 & -& 1250 & -& -& -& -& -& -& -\\ 
& -& -& -& -& {\bf 1039126}$_r$& -& {\bf 27500}$_r$& {\bf 13680}$_u$& {\bf 1100}$_G$& {\bf 528}$_d$& {\bf 104}$_u$& -& -& -\\ 
& -& -& -& -& 443520$_{ a }$& 443520$_{ g }$& 6840$_{ a }$& 6840$_{ b }$& -& 220$_{ A }$& 121$_{ C }$& 22$_{}$& -& -\\ \hline 
23 & -& -& -& -& -& -& -& -& -& -& -& -& -& -\\ 
& -& -& -& -& -& -& -& -& {\bf 12144}$_a$& {\bf 12144}$_c$& -& -& -& -\\ 
& -& -& -& -& 10200960$_{ a }$& 10200960$_{ g }$& 291456$_{ a }$& 291456$_{ A }$& -& -& 506$_{ a }$& 506$_{ g }$& 23$_{}$& -\\ \hline 
24 & -& -& -& -& -& -& 89 & -& 2 & -& -& -& -& -\\ 
& -& -& -& -& -& -& {\bf 1080816}$_r$& -& {\bf 24288}$_r$& -& -& -& -& -\\ 
& -& -& -& -& -& 2.44E8$_g$& 97569$_{ t }$& 291456$_{ b }$& 23782$_{ a }$& 23782$_{ F }$& 12144$_{ a }$& 12144$_{ g }$& 168$_{ m }$& 24$_{}$\\ \hline 
25 & -& -& -& -& -& -& -& -& -& -& -& -& -& -\\ 
& -& -& -& -& -& -& -& -& {\bf 61200}$_d$& {\bf 15600}$_a$& {\bf 15600}$_c$& -& -& -\\ 
& -& -& -& -& -& -& -& -& -& -& -& 12144$_{ b }$& 600$_{ a }$& 600$_{ g }$\\ \hline 
26 & -& -& -& -& -& -& -& -& 102 & 10 & 2 & -& -& -\\ 
& -& -& -& -& -& -& -& -& {\bf 1591200}$_r$& {\bf 156000}$_r$& {\bf 31200}$_r$& -& -& -\\ 
& -& -& -& -& -& -& -& -& -& -& 12144$_{ t }$& -& -& 15600$_{ g }$\\ \hline 
\end{tabular}}}
\label{table2} 
\end{table} 

\restoregeometry

\newpage

{\bf \large Appendix}

{\em Finite Fields}. 
Let $n=p^k$ be a prime power. There is a field $\F_n$ with $n$ elements, unique up to isomorphism. We consider groups over the field $\F_n$.

{\em Groups}.
The group $AGL(1,n)$ consists of the affine linear transformations
\[AGL(1,n) = \{ax + b | a,b\in\F_n, a\not=0\},\]
where the group operation is function composition. 
This group is sharply $2-$transitive and has $n(n-1)$ elements.

Denote the symbols of $Z_{n+1}$ by $0,1,2,\dots,n-1,\infty$. 
The permutations of $PGL(2,n)$ are $g:x\to \frac{ax+b}{cx+d}$ on $Z_{n+1}$
such that $a,b,c,d\in GF(n), ad\ne bc, g(\infty)=a/c,g(-d/c)=\infty$ 
if $c\ne 0$ and $g(\infty)=\infty$ if $c=0$. 
Then $|PGL(2,n)|=(n+1)n(n-1)$. 

Recall that $n=p^k$. The group of affine semilinear polynomials $A\Gamma L(1,n)$ arises as a semidirect product of $AGL(1,n)$ with a cyclic group of order $k$. It is generated by iteratively composing the Frobenius automorphism $x^p$ with the elements of $AGL(1,n)$. Equivalently,
\[A\Gamma L(1,n) = \{ax^{p^i}+b \mid a,b\in\F_n, a\not=0, 0\leq i < k\}\]
This group has $kn(n-1)$ elements.

The group of projective semilinear polynomials $P\Gamma L(2,n)$ arises as a semidirect product of $PGL(2,n)$ with a cyclic group of order $k$ generated by the Frobenius automorphism.
This group has $k(n+1)n(n-1)$ elements.

\end{document}